\documentclass[11pt,a4paper]{article}

\usepackage[margin=1in]{geometry}
\usepackage[numbers]{natbib}
\usepackage{amsmath,amssymb,amsfonts,amsthm}
\usepackage{booktabs}
\usepackage{subcaption}
\usepackage{multirow}
\usepackage{placeins}
\usepackage{graphicx}
\usepackage{url}
\usepackage[colorlinks=true,linkcolor=blue,citecolor=blue,urlcolor=blue]{hyperref}

\graphicspath{{figs/}}

\theoremstyle{plain}
\newtheorem{theorem}{Theorem}
\newtheorem{proposition}[theorem]{Proposition}
\theoremstyle{definition}
\newtheorem{remark}{Remark}
\newenvironment{pf}{\begin{proof}}{\end{proof}}

\hypersetup{
  pdftitle={Two-Phase Phase-Type Queues: Closed-Form Distributions, Sensitivity Analysis and the Boundary of Algebraic Tractability},
  pdfauthor={Yossi Luzon},
  pdfkeywords={Phase-type queues; Closed-form distributions; Sojourn time; Pollaczek-Khinchine formula; Matrix-analytic methods; Sensitivity analysis; Healthcare operations}
}

\title{Two-Phase Phase-Type Queues: Closed-Form Distributions,\\
Sensitivity Analysis and the Boundary of Algebraic Tractability}

\author{Yossi Luzon\thanks{School of Industrial Engineering and Management,
AFEKA -- The Tel Aviv Academic College of Engineering, 38 Mivtza Kadesh Street,
Tel Aviv 6998812, Israel. Email: \texttt{yossefl@afeka.ac.il}.
ORCID: 0000-0003-0001-9572.}}

\date{\today}

\begin{document}
\maketitle
\begin{abstract}
Two-phase phase-type ($PH_2$) service distributions are widely used in call center, healthcare and manufacturing models: they capture coefficients of variation both above and below unity while remaining parsimonious enough for reliable statistical fitting. We derive explicit closed-form queue-length distributions $p_n = A_1 r_1^n + A_2 r_2^n$ and sojourn-time densities $f_{W_s}(t) = c_1 e^{\alpha_1 t} + c_2 e^{\alpha_2 t}$ for the complete $M/PH_2/1$ family (Erlang-2, hypoexponential-2, hyperexponential-2 and Coxian-2), with every coefficient given as an explicit function of the system parameters and consolidated in ready-to-use reference tables. Because the results are functions rather than numerical values, they support analytical operations that numerical output cannot deliver directly: exact sensitivity derivatives, closed-form threshold optimization, capacity sizing by bisection on an exact cumulative distribution, and $O(1)$ tail-probability evaluation at arbitrary queue length. We further prove that $PH_2$ marks the boundary of algebraic tractability: the three-phase queue $M/E_3/1$ has a universally negative discriminant, forcing complex roots at every traffic intensity, so its distribution admits no representation as a sum of real geometric terms, and for $k \geq 5$ phases the Abel--Ruffini theorem precludes radical solutions. Validation against the matrix-analytic library BuTools confirms the derivations and characterizes where such computation degrades: queue-length evaluation holds machine precision throughout, while sojourn-time evaluation via the matrix exponential loses up to ten significant digits when high traffic intensity and high service variability act jointly. An application calibrated to published surgical time data for 46,322 cases yields an exact affine law for the sensitivity of overflow risk to case mix.
\end{abstract}

\medskip\noindent\textbf{Keywords:} Phase-type queues; Closed-form distributions; Sojourn time; Pollaczek--Khinchine formula; Matrix-analytic methods; Sensitivity analysis; Healthcare operations

\section{Introduction}\label{sec:intro}

\subsection{Two-phase queues in practice}\label{sec:practice}

Two-phase phase-type ($PH_2$) distributions are among the most widely used service-time models in applied queueing. Their prevalence reflects a practical trade-off: they capture essential service variability, enabling coefficients of variation both above and below unity, while remaining parametrically parsimonious enough for reliable statistical fitting. Two-phase models arise naturally across diverse operational contexts and have been validated empirically in high-impact applications.

In call center operations, \citet{roubos2013call} demonstrated that hyperexponential-2 ($H_2$) distributions provide a statistically superior fit to real customer patience data compared to exponential or Weibull alternatives, a finding that has informed delay announcement systems in practice \citep{yu2017delay}. In healthcare, Coxian-2 distributions have become a standard modeling tool for patient length of stay, enabling hospital capacity planning for geriatric care \citep{marshall2004coxian}, stroke patient flow management \citep{jones2019modelling}, and broader patient pathway analysis \citep{faddy2005markov}. Manufacturing applications include reliability monitoring with Erlang-2 failure processes for quality control \citep{hsu2011two}. As \citet{faddy1998inferring} observed, two-phase Coxian distributions often provide adequate empirical fit while avoiding the parameter identifiability issues that plague higher-order models.

This practical prevalence creates a concrete need: researchers and practitioners who model service systems with two-phase distributions require ready-to-use formulas for queue-length distributions, sojourn times, tail probabilities, and sensitivity measures. The present paper aims to serve as that reference.

\subsection{Two analytical traditions}\label{sec:traditions}

The classical queueing literature split early into two largely disjoint analytical traditions. The \emph{scalar transform tradition}, rooted in the work of \citet{takacs1962introduction} and formalized through the Pollaczek--Khinchine (PK) formula \citep{cohen1982single,asmussen2003applied}, operates with probability generating functions (PGFs) and Laplace--Stieltjes transforms (LSTs). It provides elegant closed-form results for $M/M/1$ and $M/D/1$ queues \citep{gross2011fundamentals}, and in principle yields solutions for any $M/G/1$ queue whose service-time transform is rational. The \emph{matrix--spectral tradition}, developed by \citet{neuts1994matrix} and extended by \citet{latouche1999introduction}, operates with phase-type generators, rate matrices, and their eigendecompositions. It provides powerful computational frameworks, implemented in tools such as BuTools \citep{horvath2012butools} and SMCSolver \citep{bini2006smcsolver}, that handle arbitrary phase-type distributions of any order.

Each tradition solved the same problems, but they developed largely in parallel and stopped short of explicitly connecting their results. The scalar approach derives queue-length distributions by inverting PGFs via partial fractions; the matrix approach derives them by computing the rate matrix $R$ and its spectral decomposition. Both yield the same geometric form $p_n = A_1 r_1^n + A_2 r_2^n$ for two-phase queues, but neither tradition has provided these expressions in a unified, explicit, directly usable treatment for the complete $PH_2$ family. Algorithmic procedures for computing $M/PH_2/1$ steady-state distributions are well established \citep{adan1996analyzing,takine2004geometric,chakravarthy2022introduction}.

A useful way to locate the present contribution is to distinguish two levels at which a solution may be called closed form. A \emph{phase-level} closed form gives the joint probabilities $\pi_{n,i}$ of queue length $n$ and service phase $i$, or equivalently the rate matrix $\mathbf{R}$, as an explicit expression. A \emph{scalar-level} closed form gives the marginal probabilities $p_n = \sum_i \pi_{n,i}$ as an explicit function of $n$. The distinction matters because operational quantities, such as tail probabilities, threshold-dependent costs, quantiles and their derivatives, are functions of the marginal distribution.

Phase-level closed forms for $PH$ queues have been available for three decades. \citet{wang1995optimal} obtained explicit expressions for the $M/E_k/1$ queue under an $N$-policy, with the phase probabilities written as nested alternating binomial sums in $r = \lambda/k\mu$ relative to the empty-system probability, and \citet{wang1999optimal} extended this line to $M/H_2/1$ systems with a removable and unreliable server. \citet{marin2014explicit} derived symbolic expressions for the rate matrix~$\mathbf{R}$ in $M/Hypo_K/1$ and $M/Hyper_K/1$ queues for general~$K$, and applied these to product-form network approximations. For $K=2$, their rate matrices subsume three of our four models (hypoexponential, hyperexponential, and Erlang as a special case). In every case, however, the results remain at the phase level: passing from $\boldsymbol{\pi}_{n+1} = \boldsymbol{\pi}_n \mathbf{R}$ to a scalar geometric expansion requires solving the characteristic polynomial of $\mathbf{R}$, which none of these treatments undertakes. Neither do they treat Coxian distributions or derive sojourn-time distributions. More broadly, the scalar transform tradition typically stops at the generating-function level, while the matrix tradition works with numerical algorithms rather than closed-form coefficients. As a result, a unified treatment providing explicit scalar steady-state and sojourn-time distributions for the complete $PH_2$ family, together with systematic accuracy benchmarking, has not previously appeared.

\subsection{Contribution}\label{sec:contribution}

\begin{figure}[!ht]
\centering
\includegraphics[width=\linewidth]{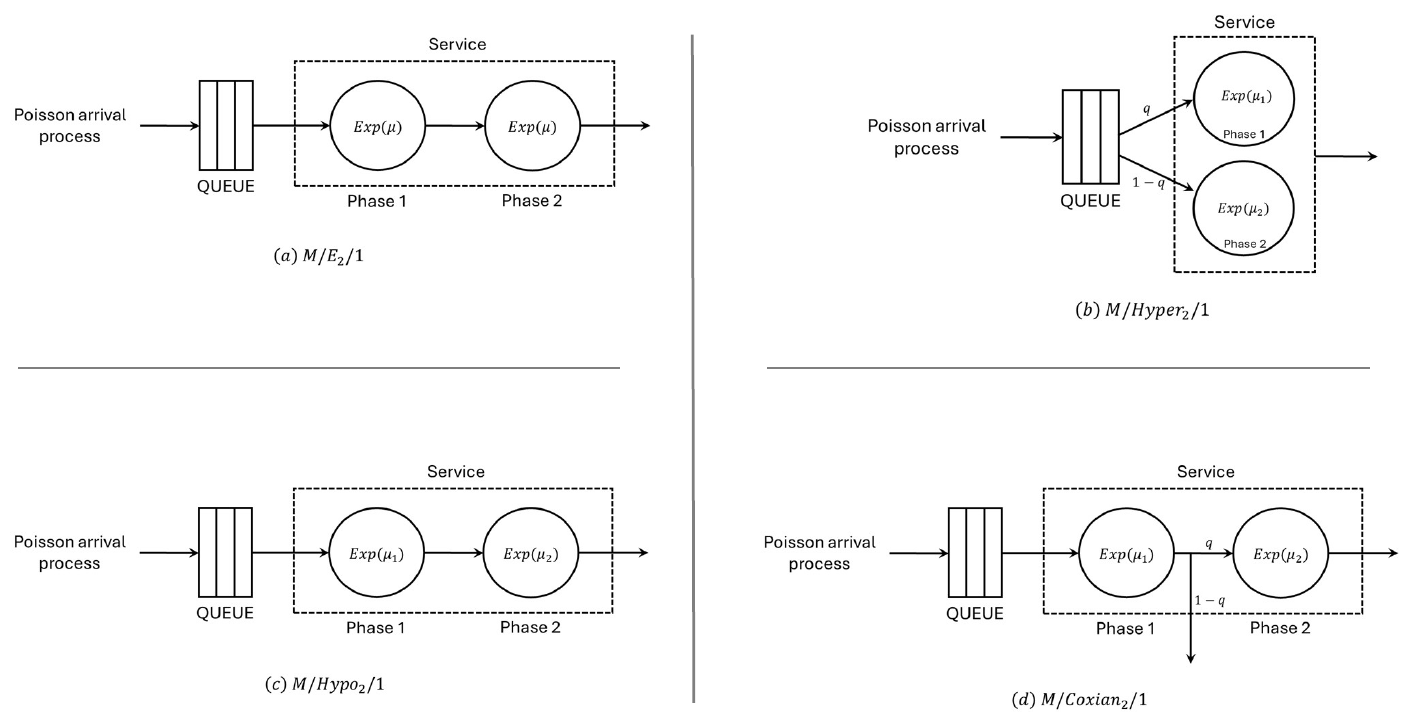}
\caption{Single-server queues with two-phase $PH$ service distributions:
(a) $M/E_2/1$ with sequential identical phases,
(b) $M/Hyper_2/1$ with probabilistic phase selection,
(c) $M/Hypo_2/1$ with sequential heterogeneous phases,
(d) $M/Coxian_2/1$ with sequential phases and probabilistic exit.}
\label{fig:systems}
\end{figure}

This paper provides a complete algebraic characterization of the $M/PH_2/1$ family (Figure~\ref{fig:systems}): $M/E_2/1$, $M/Hypo_2/1$, $M/Hyper_2/1$, and $M/Coxian_2/1$. Drawing on both the PK formula and eigenvalue decomposition, and cross-validating all results through multiple independent methods including the Tak\'{a}cs series expansion, we derive:

\begin{itemize}
\item \textbf{Queue-length distributions}: $p_n = A_1 r_1^n + A_2 r_2^n$, with all coefficients and geometric ratios given as explicit functions of system parameters.

\item \textbf{Sojourn-time distributions}: $f_{W_s}(t) = c_1 e^{\alpha_1 t} + c_2 e^{\alpha_2 t}$, derived directly from the PK sojourn-time transform, with explicit poles and residues for each model.
\end{itemize}

The emphasis is not on the derivation method, which combines standard techniques from both traditions, but on the resulting formulas: compact, validated, and ready for use. Considerable effort has gone into simplifying the algebraic expressions to their most concise usable form, a step that is laborious but essential for a practical reference.

Beyond computation, these closed-form expressions support analytical operations that are inaccessible to numerical output alone. A fundamental distinction exists between \emph{values} (specific probabilities computed for given parameters) and \emph{functions} (symbolic expressions that support differentiation, optimization, and analytical manipulation). Sensitivity analysis ($\partial P(N > K)/\partial\rho$), gradient-based capacity optimization, exact tail-probability bounds, and closed-form sojourn-time distributions all require the functional form. We demonstrate these capabilities and show that they yield structural insights, such as counterintuitive crossover behavior in tail-probability sensitivity across service distributions, that are difficult to obtain through numerical methods alone.

Section~\ref{sec:casestudy} carries this through on an empirically calibrated system. Using published surgical time data for 46{,}322 cases \citep{strum2000surgeon}, we model a urological procedure suite as an $M/Coxian_2/1$ queue and derive an exact law for the sensitivity of overflow risk to case mix: the elasticity is affine in the backlog threshold, with slope equal to the elasticity of the dominant geometric ratio. The relation is visible only when the coefficients are available as functions of the parameters, and it identifies which service targets are exposed to case-mix drift and through which channel.

Alongside the closed-form results, we use these exact solutions as benchmarks to characterize the numerical accuracy of BuTools \citep{horvath2012butools} across both queue-length and sojourn-time computation. The validation reveals that these two computations, though housed in the same software, rely on different numerical primitives and exhibit fundamentally different failure modes. Queue-length computation via cyclic reduction achieves machine-precision accuracy across the entire parameter space tested. Sojourn-time computation, however, relies on the matrix exponential, and our experiments reveal accuracy degradation of up to ten significant digits when high $\rho$ and high $C_s^2$ act jointly. The closed-form expressions bypass both limitations.

Two-phase systems are not merely a convenient starting point; they mark the exact boundary of analytical tractability for phase-type queues. As discussed above, symbolic rate matrices $\mathbf{R}$ exist for arbitrary phase count, but extracting the scalar probabilities $p_n = \sum_i A_i r_i^n$ requires solving the characteristic polynomial of the PK denominator, whose degree equals the number of phases. At $k=2$, the quadratic formula yields compact expressions involving only rational operations and square roots. At $k=3$, this tractability breaks down regardless of the phase-type structure. For sequential models such as $M/E_3/1$, we prove (Proposition~\ref{prop:boundary}, Section~\ref{sec:boundary}) that the characteristic polynomial's discriminant is strictly negative for \emph{every} $\rho \in (0,1)$, so that the distribution cannot be written in the form $p_n = \sum_i A_i r_i^n$ with real coefficients and real geometric ratios. For mixture models such as $M/Hyper_3/1$, the roots remain real, but expressing them requires cube roots of complex intermediates (the classical \emph{casus irreducibilis}), producing expressions spanning thousands of characters. For $k \geq 5$, the Abel--Ruffini theorem precludes radical solutions entirely \citep{stewart2015galois}. Two-phase systems thus represent the largest class of $M/G/1$ queues admitting compact symbolic steady-state solutions.

The remainder of this paper is organized as follows. Section~\ref{sec:prelim} presents the analytical framework. Section~\ref{sec:motivation} clarifies which operational quantities require the full distribution. Section~\ref{sec:closedform} presents the closed-form results for all four models and the sojourn-time distributions. Section~\ref{sec:butools} validates these results numerically and characterizes practical performance. Section~\ref{sec:applications} demonstrates analytical applications. Section~\ref{sec:casestudy} applies the results to a procedure suite calibrated from published surgical time data, deriving an exact law for the sensitivity of overflow risk to case mix. Section~\ref{sec:boundary} establishes the tractability boundary through a formal discriminant analysis. Section~\ref{sec:conclusions} concludes. Proofs are provided in the Appendices.

\section{Analytical Framework}\label{sec:prelim}

\subsection{The Pollaczek--Khinchine Formula}\label{sec:pk}

For an $M/G/1$ queue with arrival rate $\lambda$ and service-time distribution $B$, the generating function of the steady-state queue-length distribution is given by the Pollaczek--Khinchine formula \citep{asmussen2003applied,adan2001queueing,gross2011fundamentals}:
\begin{align}\label{PK}
P(z)=\frac{(1-\rho)\tilde{B}(\lambda-\lambda z)(1-z)}{\tilde{B}(\lambda-\lambda z)-z}
\end{align}
where $\tilde{B}(s) = E[e^{-sB}]$ is the Laplace--Stieltjes transform (LST) of the service time and $\rho=\lambda E[B]<1$ ensures stability. When $\tilde{B}(s)$ is rational, as occurs for all phase-type service distributions, $P(z)$ is also rational, and partial-fraction decomposition enables direct extraction of the steady-state probabilities \citep{wilf2005generatingfunctionology}. Specifically, if $P(z)$ decomposes as $\sum_i \frac{A_i}{1 - r_i z}$, then $p_n = \sum_i A_i r_i^n$.

For sojourn times, the PK framework provides a second classical result. The LST of the stationary sojourn time $W_s$ in an $M/G/1$ queue is \citep{gross2011fundamentals,cohen1982single}:
\begin{equation}\label{eq:sojourn_LST}
\widetilde{W}_s(s) = \frac{(1-\rho)\,s\,\tilde{B}(s)}{s - \lambda + \lambda\tilde{B}(s)}.
\end{equation}
When $\tilde{B}(s)$ is rational, $\widetilde{W}_s(s)$ is also rational and can be inverted by partial fractions. For $PH_2$ services, this yields sojourn-time densities expressed as sums of exactly two exponentials.

\subsection{Two-Phase Phase-Type Distributions}\label{sec:ph2}

Phase-type distributions, introduced by \citet{cox1955use}, have warranted comprehensive textbook treatments \citep{bladt2017matrix,buchholz2014input} and dedicated surveys documenting their expanding role in population genetics \citep{hobolth2024phase} and survival analysis \citep{lindqvist2023phase}. The families $M/E_2/1$, $M/Hyper_2/1$, $M/Hypo_2/1$, and $M/Coxian_2/1$ represent $M/G/1$ queues with two-phase service. $E_2$ (Erlang-2) has two sequential identical exponential phases; $Hyper_2$ (Hyperexponential-2) is a mixture of two exponentials; $Hypo_2$ (Hypoexponential-2) has two sequential phases with distinct rates; and $Coxian_2$ generalizes all three through sequential phases with probabilistic early exit. The squared coefficient of variation $C_s^2$ characterizes service variability: $E_2$ has $C_s^2 = 0.5$; $Hypo_2$ spans $[0.5, 1)$; $Hyper_2$ achieves $C_s^2 > 1$; and $Coxian_2$ covers $[0.5, \infty)$ through its parameters. The lower bound $C_s^2 = 0.5$, attained by $E_2$, is the minimum achievable by any two-phase phase-type distribution; representing more regular service requires additional phases.

\section{What Closed-Form Distributions Enable}\label{sec:motivation}

The PK generating function $P(z) = E[z^N]$ directly yields moments (via derivatives at $z=1$) and exponential functionals ($E[e^{\alpha N}] = P(e^\alpha)$). However, many operationally critical quantities, including tail probabilities $P(N > K)$, threshold-dependent costs $E[(N-K)^+]$, quantiles for capacity sizing, conditional expectations during congestion episodes, and sensitivity derivatives $\partial P(N > K)/\partial\rho$, require the full distribution $\{p_n\}$. These quantities are fundamental to inventory-queueing systems \citep{buzacott1993stochastic} and call center staffing \citep{gans2003telephone}. The closed-form expressions $p_n = A_1 r_1^n + A_2 r_2^n$ derived in the next section provide exact, explicit formulas for all such quantities; Section~\ref{sec:applications} demonstrates these capabilities and Section~\ref{sec:casestudy} applies them to an empirically calibrated system.

\section{Closed-Form Results}\label{sec:closedform}

We present explicit steady-state distributions for all four two-phase models, followed by unified sojourn-time distributions. In each case, the distribution takes the form $p_n = A_1 r_1^n + A_2 r_2^n$, where the coefficients and geometric ratios are explicit functions of the queue parameters. These expressions were derived by substituting each model's service-time LST into the PK formula~\eqref{PK}, performing partial-fraction decomposition on the resulting rational generating function, and cross-validating all results against eigenvalue decomposition of the corresponding QBD rate matrix (Appendix~\ref{app:matrix_geometric}) as well as the Tak\'{a}cs probability generating function series expansion. All proofs appear in Appendices~\ref{app:proof1}--\ref{app:proof4}.

\medskip
\noindent\textbf{Sign convention.} Throughout this paper, the compact notation $X_{1,2} = \alpha \mp \beta$ means $X_1 = \alpha - \beta$ (upper sign) and $X_2 = \alpha + \beta$ (lower sign). When both $\pm$ and $\mp$ appear in the same expression, they are \emph{linked}: the upper signs are taken together for subscript~1 and the lower signs together for subscript~2.

\subsection{$M/E_2/1$ Steady-State Distribution}\label{sec:ME2}

For the $M/E_2/1$ system with two sequential exponential phases (each with rate $\mu$), we have $\tilde{B}(s) = (\mu/(\mu + s))^2$ and $\rho = 2\lambda/\mu$. Substituting into \eqref{PK} yields the probability generating function:
\begin{align}\label{PH-PL}
    P_{L^d}(z)=\frac{1-\rho}{1-\rho z-\rho^{2}z(1-z)/4}.
\end{align}

\begin{theorem}\label{Theorem_1}
The steady-state distribution of the $M/E_2/1$ queue is:
\begin{equation}\label{M_E2_1_Dist}
p_n = A_1 r_1^n + A_2 r_2^n
\end{equation}
where
\begin{align*}
r_{1,2} &= \frac{2\rho}{\rho \mp \sqrt{\rho(\rho+8)} + 4},\\[4pt]
A_{1,2} &= \pm\frac{8(1-\rho)}{\sqrt{\rho(\rho+8)}\left(\rho \mp \sqrt{\rho(\rho+8)} + 4\right)}.
\end{align*}
\end{theorem}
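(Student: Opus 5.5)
The plan is to start from the generating function \eqref{PH-PL}, which the paper already obtains by substituting $\tilde B(s)=(\mu/(\mu+s))^2$ with $\mu=2\lambda/\rho$ into \eqref{PK}. We then read off $p_n$ by partial fractions, as described in Section~\ref{sec:pk}. First we rewrite the denominator as a quadratic in $z$:
\[
D(z)=1-\rho z-\tfrac{\rho^2}{4}z(1-z)=\tfrac{\rho^2}{4}z^2-\left(\rho+\tfrac{\rho^2}{4}\right)z+1 .
\]
We want the factorization $D(z)=(1-r_1z)(1-r_2z)$. Matching coefficients gives $r_1r_2=\rho^2/4$ and $r_1+r_2=\rho+\rho^2/4$. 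So $r_1,r_2$ are the roots of
\[
x^2-\left(\rho+\tfrac{\rho^2}{4}\right)x+\tfrac{\rho^2}{4}=0 .
\]
Equivalently, the $r_i$ are the reciprocals of the zeros of $D$.

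Next we solve for these zeros and check the stated form. The zeros of $D$ are
\[
z_{1,2}=\frac{\rho+4\mp\sqrt{(\rho+4)^2-16}}{2\rho},
\]
and $(\rho+4)^2-16=\rho(\rho+8)$. Hence $r_{1,2}=1/z_{1,2}=2\rho/(\rho\mp\sqrt{\rho(\rho+8)}+4)$, which is exactly the statement. The sign convention needs some care. Since $z_1<z_2$, we get $r_1>r_2$, so $r_1$ is the dominant ratio. For $\rho\in(0,1)$ the discriminant is positive, so both roots are real. Also $z_1>1$, because $D(1)=1-\rho>0$ and $D(0)=1>0$, and the vertex of $D$ lies to the right of $1$. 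Therefore $0<r_2<r_1<1$, and the expansion is valid for $|z|\le 1$.

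Third, we compute the residues. We write $P(z)=\frac{1-\rho}{(1-r_1z)(1-r_2z)}=\frac{A_1}{1-r_1z}+\frac{A_2}{1-r_2z}$. The standard cover-up rule gives
\[
A_1=(1-\rho)\frac{r_1}{r_1-r_2},\qquad A_2=-(1-\rho)\frac{r_2}{r_1-r_2}.
\]
Expanding each term geometrically then gives $p_n=A_1r_1^n+A_2r_2^n$. As a check, $A_1+A_2=1-\rho=p_0$.

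The only real work, and the main obstacle, is simplifying $A_{1,2}$ into the paper's compact form. We would proceed as follows:
\begin{itemize}
\item Use $r_1-r_2=\sqrt{(r_1+r_2)^2-4r_1r_2}$. This equals $\tfrac{\rho}{4}\sqrt{\rho(\rho+8)}\cdot\rho^{-1}\cdot\rho$ after simplification, namely $r_1-r_2=\tfrac{\rho}{4}\sqrt{\rho(\rho+8)}$. This follows because $(\rho+\rho^2/4)^2-\rho^2=\tfrac{\rho^3}{16}(\rho+8)$.
\item Substitute this together with $r_i=2\rho/(\rho\mp\sqrt{\rho(\rho+8)}+4)$. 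This gives $A_{1,2}=\pm\,8(1-\rho)/\bigl(\sqrt{\rho(\rho+8)}\,(\rho\mp\sqrt{\rho(\rho+8)}+4)\bigr)$, matching the statement.
\end{itemize}
The care needed is in the linked $\pm/\mp$ signs. As a final sanity check we would verify $A_1+A_2=1-\rho$ and $\sum_n p_n=1$ symbolically.
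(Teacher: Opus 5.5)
Your proposal is correct and takes essentially the same route as the paper's proof: factor the quadratic denominator of the $M/E_2/1$ generating function, decompose it into partial fractions, and expand each term geometrically. Your use of Vieta's relations to get $r_1-r_2=\tfrac{\rho}{4}\sqrt{\rho(\rho+8)}$, and your explicit check that $0<r_2<r_1<1$ for $\rho<1$, fill in the simplification and convergence steps that the paper only sketches. One small fix: drop the garbled intermediate expression ``$\cdot\rho^{-1}\cdot\rho$'' in your first bullet; the stated result is right.
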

\noindent Proof: Appendix~\ref{app:proof1}.

\subsection{$M/Hypo_2/1$ Steady-State Distribution}\label{sec:MHypo2}

For the $M/Hypo_2/1$ system with two sequential service stages with rates $\mu_1$ and $\mu_2$ ($\mu_1 \neq \mu_2$), we have $\tilde{B}(s)=(\mu_1/(\mu_1+s))(\mu_2/(\mu_2+s))$ and $\rho=\lambda(1/\mu_1+1/\mu_2)<1$.

\begin{theorem}\label{thm:hypo2}
The steady-state distribution of the $M/Hypo_2/1$ queue is:
\begin{equation}\label{HypoD}
p_n = A_1 r_1^n + A_2 r_2^n
\end{equation}
where
\begin{align*}
r_{1,2} &= \frac{2\lambda^2}{\lambda^2 + \lambda\mu_1 + \lambda\mu_2 \mp \lambda\beta},\\[4pt]
A_{1,2} &= \pm\frac{2\lambda\mu_1\mu_2(1-\rho)}{(\lambda^2 + \lambda\mu_1 + \lambda\mu_2 \mp \lambda\beta)\,\beta},
\end{align*}
and
$\beta = \sqrt{\lambda^2 + 2\lambda\mu_1 + \mu_1^2 + 2\lambda\mu_2 - 2\mu_1\mu_2 + \mu_2^2}$.
\end{theorem}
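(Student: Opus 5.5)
The plan is to follow the scalar transform route of Section~\ref{sec:pk}: substitute the hypoexponential LST into the PK formula~\eqref{PK}, cancel the trivial factor $(1-z)$, factor the remaining quadratic, and read off $p_n$ by partial fractions. I would work throughout with $u=\lambda(1-z)$. Then
\[
\tilde B(\lambda-\lambda z)=\frac{\mu_1\mu_2}{(\mu_1+u)(\mu_2+u)},
\]
and clearing denominators in~\eqref{PK} gives
\[
P(z)=\frac{(1-\rho)(1-z)\mu_1\mu_2}{Q(z)},\qquad Q(z)=\mu_1\mu_2-z(\mu_1+u)(\mu_2+u).
\]

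\textbf{Removing the root $z=1$.} Expand $(\mu_1+u)(\mu_2+u)=\mu_1\mu_2+(\mu_1+\mu_2)u+u^2$. Since $u$ carries a factor $(1-z)$, every term of $Q$ is divisible by $(1-z)$. Dividing gives
\[
\frac{Q(z)}{1-z}=\lambda^2z^2-\lambda(\lambda+\mu_1+\mu_2)z+\mu_1\mu_2 .
\]
Hence
\[
P(z)=\frac{(1-\rho)\mu_1\mu_2}{\lambda^2z^2-\lambda(\lambda+\mu_1+\mu_2)z+\mu_1\mu_2}.
\]
As a check, $P(1)=1$: the denominator at $z=1$ equals $\mu_1\mu_2-\lambda(\mu_1+\mu_2)=\mu_1\mu_2(1-\rho)$.

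\textbf{Factoring the quadratic.} The discriminant of this quadratic is $\lambda^2[(\lambda+\mu_1+\mu_2)^2-4\mu_1\mu_2]$. Expanding the bracket reproduces exactly $\beta^2$, so the discriminant is $\lambda^2\beta^2$ with $\beta>0$. The roots are therefore real and distinct,
\[
z_{1,2}=\frac{D_\mp}{2\lambda^2},\qquad D_\mp=\lambda^2+\lambda\mu_1+\lambda\mu_2\mp\lambda\beta .
\]
The quadratic has constant term $\mu_1\mu_2$, so it equals $\mu_1\mu_2(1-r_1z)(1-r_2z)$ with $r_i=1/z_i$. This yields precisely the stated $r_{1,2}=2\lambda^2/D_\mp$, and $P(z)=(1-\rho)/[(1-r_1z)(1-r_2z)]$.

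\textbf{Partial fractions and simplification.} Standard partial fractions give $A_1=(1-\rho)r_1/(r_1-r_2)$ and $A_2=(1-\rho)r_2/(r_2-r_1)$. It remains to simplify $r_1-r_2$. The key identities are
\[
D_-D_+=\lambda^2\bigl[(\lambda+\mu_1+\mu_2)^2-\beta^2\bigr]=4\lambda^2\mu_1\mu_2,\qquad D_+-D_-=2\lambda\beta .
\]
From these, $r_1-r_2=\lambda\beta/(\mu_1\mu_2)$. Substituting gives
\[
A_{1,2}=\pm\frac{2\lambda\mu_1\mu_2(1-\rho)}{D_\mp\,\beta},
\]
which is the claimed form.

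\textbf{Validity of the expansion and main obstacle.} The expansion $p_n=A_1r_1^n+A_2r_2^n$ is legitimate as a power series for $|z|<1$ provided $|r_i|<1$. Both $r_i$ are positive because $D_-D_+>0$ and $D_+>0$. I would show $r_1<1$, i.e.\ $z_1>1$, as follows. The quadratic is positive at $z=0$ (value $\mu_1\mu_2$) and at $z=1$ (value $\mu_1\mu_2(1-\rho)$), and its vertex $(\lambda+\mu_1+\mu_2)/(2\lambda)$ exceeds $1$. So both roots lie beyond $1$. The only real obstacle is this algebraic simplification bookkeeping, in particular spotting $D_-D_+=4\lambda^2\mu_1\mu_2$. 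I would check the result against the special case $\mu_1=\mu_2=\mu$: there $\beta=\sqrt{\lambda^2+4\lambda\mu}$, and the formulas should reduce to Theorem~\ref{Theorem_1} with $\rho=2\lambda/\mu$.
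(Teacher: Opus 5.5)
Your proposal is correct and follows the paper's own route: substitute the hypoexponential LST into the PK formula, reduce to the quadratic denominator $\lambda^2z^2-\lambda(\lambda+\mu_1+\mu_2)z+\mu_1\mu_2$, factor it using $\beta$, and expand the partial fractions as geometric series. Your identities $D_-D_+=4\lambda^2\mu_1\mu_2$ and $D_+-D_-=2\lambda\beta$, and your check that $0<r_2<r_1<1$, fill in steps the paper leaves implicit; the vertex bound you assert follows from stability, since $\lambda<\mu_1\mu_2/(\mu_1+\mu_2)<\mu_1+\mu_2$.
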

\noindent Proof: Appendix~\ref{app:proof2}.

The hypoexponential distribution has squared coefficient of variation $C_s^2 \in [0.5, 1)$, representing low-variability service. Its structure, two sequential phases with different rates, is a natural generalization of the Erlang-2, to which it reduces when $\mu_1 = \mu_2$.

\subsection{$M/Hyper_2/1$ Steady-State Distribution}\label{sec:MH2}

For the $M/Hyper_2/1$ system with service rate $\mu_1$ (with probability $q$) or $\mu_2$ (with probability $1-q$), we have $\tilde{B}(s)=q\mu_1/(\mu_1+s)+(1-q)\mu_2/(\mu_2+s)$ and $\rho=\lambda(q/\mu_1+(1-q)/\mu_2)<1$.

\begin{theorem}\label{thm:hyper2}
The steady-state distribution of the $M/Hyper_2/1$ queue is:
\begin{equation}\label{Hyper-2-Dist}
p_n = A_1 r_1^n + A_2 r_2^n
\end{equation}
where
\begin{align*}
r_{1,2} &= \frac{2\lambda^2}{\lambda^2+\lambda(\mu_1+\mu_2) \mp \beta},\\[4pt]
A_{1,2} &= \frac{(1-\rho)\lambda\Big[\mu_2\big(\lambda\mu_2-\lambda(\lambda+\mu_1)\mp\beta\big) + q(\mu_1-\mu_2)\big(\lambda(\mu_1+\mu_2-\lambda)\mp\beta\big)\Big]}{\beta\Big(\beta \mp \lambda(\lambda+\mu_1+\mu_2)\Big)},
\end{align*}
and $\beta = \sqrt{\lambda^2(\lambda^2+(\mu_1-\mu_2)^2 - 2\lambda(2q-1)(\mu_1-\mu_2))}$.
\end{theorem}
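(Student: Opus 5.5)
We proceed exactly as for Theorems~\ref{Theorem_1} and~\ref{thm:hypo2}: substitute the hyperexponential LST into the PK formula~\eqref{PK}, reduce $P(z)$ to a rational function with quadratic denominator, factor the denominator as $(1-r_1z)(1-r_2z)$, and read off $p_n=A_1r_1^n+A_2r_2^n$ by partial fractions.

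\textbf{Step 1: form the rational generating function.} With $s=\lambda(1-z)$ we have $\mu_i+s=\mu_i+\lambda-\lambda z$. Write $a_i=\mu_i+\lambda-\lambda z$ and
\[
\tilde B = \frac{q\mu_1 a_2+(1-q)\mu_2 a_1}{a_1a_2}.
\]
Then $\tilde B-z = \bigl[q\mu_1a_2+(1-q)\mu_2a_1-za_1a_2\bigr]/(a_1a_2)$. The bracket is a cubic in $z$. It vanishes at $z=1$, since there $\tilde B(0)=1$.

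\textbf{Step 2: cancel the common factor.} Divide the bracket by $(1-z)$ to obtain a quadratic $D(z)$. The factor $(1-z)$ then cancels against the numerator of~\eqref{PK}, leaving
\[
P(z)=\frac{(1-\rho)\bigl[q\mu_1a_2+(1-q)\mu_2a_1\bigr]}{D(z)}.
\]
The numerator $N(z)$ is linear in $z$.

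\textbf{Step 3: find the geometric ratios.} Take the reciprocal polynomial of $D$; its roots are the ratios $r_{1,2}$. Solving it by the quadratic formula gives $r_{1,2}=2\lambda^2/(\lambda^2+\lambda(\mu_1+\mu_2)\mp\beta)$ in rationalized form. Here $\beta^2$ must match the discriminant $\lambda^2\bigl(\lambda^2+(\mu_1-\mu_2)^2-2\lambda(2q-1)(\mu_1-\mu_2)\bigr)$.

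\textbf{Step 4: find the coefficients.} Since $\deg N<\deg D$, there is no polynomial part, so $P(z)=A_1/(1-r_1z)+A_2/(1-r_2z)$. Compute the residue at each root:
\[
A_i=\frac{N(1/r_i)}{\prod_{j\ne i}(1-r_j/r_i)\cdot(\text{leading factor})},
\]
equivalently $A_i=\lim_{z\to1/r_i}(1-r_iz)P(z)$.

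\textbf{Step 5: check reality and stability.} Using $\rho<1$, verify that the discriminant is positive so the roots are real. One way is to note that $\beta^2=\lambda^2[(\lambda-(2q-1)(\mu_1-\mu_2))^2+4q(1-q)(\mu_1-\mu_2)^2]\ge0$. Also verify $|r_i|<1$, so the expansion converges on $|z|\le1$. As independent checks, confirm $A_1+A_2=p_0=1-\rho$ and $\sum_n p_n=P(1)=1$. Finally, check the special cases $q\in\{0,1\}$ and $\mu_1=\mu_2$, which should reduce to $M/M/1$.

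The main obstacle is Step 4. The residues come out as large rational expressions in $\lambda,\mu_1,\mu_2,q,\beta$. Reducing them to the stated form requires the following:
\begin{itemize}
\item repeatedly replacing $\beta^2$ by its polynomial value;
\item rationalizing the denominators $\beta\mp\lambda(\lambda+\mu_1+\mu_2)$;
\item splitting the numerator into a $q$-free part, $\mu_2(\cdots)$, and a part proportional to $q(\mu_1-\mu_2)$.
\end{itemize}
I would organize this by writing $N(1/r_i)$ as $\mu_2(\cdot)+q(\mu_1-\mu_2)(\cdot)$ from the outset, since the numerator is affine in $q$. I would then verify the final identity symbolically, or by the sum check $A_1+A_2=1-\rho$, rather than by hand expansion.
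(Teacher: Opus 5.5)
Your proposal is correct and follows the paper's route: substitute the $Hyper_2$ LST into the PK formula, cancel the $(1-z)$ factor to get a linear-over-quadratic generating function, factor the quadratic using $\beta$, and obtain $r_{1,2}$, $A_{1,2}$ by partial fractions and geometric expansion. One refinement: the check $A_1+A_2=1-\rho$ alone cannot confirm both coefficients, but combined with $A_1/(1-r_1)+A_2/(1-r_2)=1$, which you also list, it determines $(A_1,A_2)$ uniquely because $r_1\neq r_2$ (as $\beta>0$), so those two checks together fully verify the stated formulas without expanding the residues by hand.
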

\noindent Proof: Appendix~\ref{app:proof3}.

The hyperexponential distribution, with $C_s^2 > 1$, represents high-variability service. Its parallel branching structure produces more complex algebraic expressions than sequential phases, but the fundamental two-term geometric form is preserved.

\subsection{$M/Coxian_2/1$ Steady-State Distribution}\label{sec:MCox2}

For the $M/Coxian_2/1$ system with sequential stages (rates $\mu_1$, $\mu_2$) and branching probability $q$ to the second phase, we have $\tilde{B}(s)=\frac{\mu_1((1-q) s + \mu_2)}{(s + \mu_1)(s + \mu_2)}$ and $\rho=\lambda(1/\mu_1+q/\mu_2)$.

\begin{theorem}\label{thm:coxian2}
The steady-state distribution of the $M/Coxian_2/1$ queue is:
\begin{equation}\label{CoxianD}
p_n = A_1 r_1^n + A_2 r_2^n
\end{equation}
where
\begin{align*}
r_{1,2} &= \frac{2\lambda\mu_2}{\lambda\mu_2 + \mu_1\mu_2 + \mu_2^2 \mp \beta},
\end{align*}
and the coefficients $A_1$ and $A_2$ are:
\[
\begin{aligned}
A_{1,2} =
\frac{\mu_{1}\mu_{2}(1-\rho)\,
\Big((1-q)\big(\beta \pm (\mu_{2}\lambda - \mu_{1}\mu_{2})\big) \;\pm\; (1+q)\mu_{2}^{2}\Big)}
{\beta\Big(\mu_{2}(\lambda+\mu_{1}+\mu_{2}) \mp \beta\Big)},
\end{aligned}
\]
with $\beta = \mu_2\sqrt{\lambda^2 + (\mu_1 - \mu_2)^2 + 2\lambda(\mu_2 + \mu_1(2q-1))}$.
\end{theorem}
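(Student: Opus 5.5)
Our approach follows the pattern used for Theorems~\ref{Theorem_1}--\ref{thm:hyper2}: substitute the Coxian-2 transform into the PK formula~\eqref{PK}, reduce $P(z)$ to a ratio whose denominator is a quadratic in $z$, factor that quadratic, and read off $A_1,A_2$ by partial fractions.

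\textbf{Step 1: the quadratic.} Set $s=\lambda(1-z)$ and write $\tilde B(s)=N(s)/D(s)$ with
\[
N(s)=\mu_1\big((1-q)s+\mu_2\big),\qquad D(s)=(s+\mu_1)(s+\mu_2).
\]
Then
\[
P(z)=\frac{(1-\rho)(1-z)\,N}{N-zD}.
\]
Here $N-zD$ is a cubic in $z$, and it vanishes at $z=1$ because $N(0)=D(0)=\mu_1\mu_2$. Dividing out the factor $(1-z)$ leaves a quadratic $Q(z)$, so
\[
P(z)=\frac{(1-\rho)N(\lambda(1-z))}{Q(z)},
\]
with numerator linear in $z$. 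I will check that $Q(0)$ is proportional to $\mu_1\mu_2(1-\rho)$, so that $P(0)=1-\rho$. This gives a useful sanity check on the normalisation.

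\textbf{Step 2: the ratios $r_i$.} Write $Q(z)=Q(0)(1-r_1z)(1-r_2z)$. Then $r_1,r_2$ are the reciprocals of the roots of $Q$, i.e.\ the roots of the reversed polynomial $z^2Q(1/z)$. Applying the quadratic formula and rationalising should give
\[
r_{1,2}=\frac{2\lambda\mu_2}{\lambda\mu_2+\mu_1\mu_2+\mu_2^2\mp\beta},
\]
with $\beta$ being $\mu_2$ times the square root of the discriminant. Two consistency checks are available. For $q=1$ the formula should reduce to Theorem~\ref{thm:hypo2}, since the discriminants agree: $(\mu_1-\mu_2)^2+2\lambda(\mu_1+\mu_2)+\lambda^2$ matches $\beta^2/\mu_2^2$ at $q=1$. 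For $q=0$ one ratio should be the $M/M/1$ value $\lambda/\mu_1$ and its coefficient should vanish, or the system should otherwise degenerate consistently.

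\textbf{Step 3: the coefficients $A_i$.} Since $\deg N=1<\deg Q=2$, $P(z)$ has the form $A_1/(1-r_1z)+A_2/(1-r_2z)$. The coefficients follow from residues:
\[
A_1=\frac{(1-\rho)N\big(\lambda(1-1/r_1)\big)}{Q(0)\,(1-r_2/r_1)},
\]
and symmetrically for $A_2$. Note that $r_1-r_2$ is proportional to $\beta$, which produces the $\beta$ in the denominator of the stated formula. Evaluating $N$ at $z=1/r_i$ gives the linear factor $(1-q)(\beta\pm(\mu_2\lambda-\mu_1\mu_2))\pm(1+q)\mu_2^2$. The remaining denominator factor $\mu_2(\lambda+\mu_1+\mu_2)\mp\beta$ is exactly the denominator of $r_i$ up to the constant $2\lambda\mu_2$.

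\textbf{Main obstacle.} The hard part is the algebraic simplification in Step 3, turning the raw residue into the stated compact form with linked signs. My plan is to use the identity
\[
(\lambda\mu_2+\mu_1\mu_2+\mu_2^2)^2-\beta^2=4\lambda\mu_2\,Q(0)\cdot(\text{const}),
\]
which follows from Vieta's formulas, to clear conjugate denominators. I would then verify the final expression two ways: symbolically, by checking $A_1+A_2=1-\rho$ (i.e.\ $p_0$); and by matching $A_1r_1+A_2r_2$ against the coefficient $p_1$ obtained from expanding $P(z)$. Only if both checks pass will I treat the closed form as established.
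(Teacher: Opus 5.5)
Your proposal is correct and follows the paper's own route: substitute into the PK formula, cancel the root $z=1$, factor the quadratic, and take partial fractions. The residue step is also easier than you fear, since $N$ evaluated at $z=1/r_{1,2}$ equals $\pm\frac{\mu_1}{2\mu_2}$ times the bracketed factor and $Q(0)(r_1-r_2)=\lambda\beta/\mu_2$, so each $A_i$ is $\frac{(1-\rho)\mu_1 r_i}{2\lambda\beta}$ times its bracketed factor, which is exactly the stated form once $r_i$ is substituted. Two of your sanity checks are misstated: $Q(0)=\mu_1(\mu_2+(1-q)\lambda)=N(\lambda)$ is what gives $P(0)=1-\rho$, while $Q(1)=\mu_1\mu_2(1-\rho)$ gives $P(1)=1$; and at $q=0$ the vanishing coefficient belongs to the spurious ratio $\lambda/(\lambda+\mu_2)$, not to $\lambda/\mu_1$.
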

\noindent Proof: Appendix~\ref{app:proof4}.

The Coxian-2 distribution provides a unifying framework: when $q = 0$, it reduces to $M/M/1$ with service rate $\mu_1$; when $q = 1$, it reduces to $M/Hypo_2/1$; and when $\mu_1 = \mu_2 = \mu$ and $q = 1$, it reduces to $M/E_2/1$. By tuning $q$, $\mu_1$, and $\mu_2$, the squared coefficient of variation can span $[0.5,\infty)$, covering both the low-variability and high-variability regimes, which makes it especially valuable for model calibration and sensitivity analysis.

\subsection{Sojourn-Time Distributions}\label{sec:sojourn}

For all $M/PH_2/1$ queues, the sojourn-time distribution can be obtained in closed form directly from the PK sojourn-time LST~\eqref{eq:sojourn_LST}, bypassing the need to condition on queue length. Since $\tilde{B}(s)$ is a rational function whose numerator has degree at most one and whose denominator has degree two for any $PH_2$ distribution, $\widetilde{W}_s(s)$ is also rational. Clearing the common denominator and cancelling the $s=0$ root reduces $\widetilde{W}_s(s)$ to a proper rational function with a quadratic denominator in $s$. Partial-fraction decomposition and Laplace inversion then yield a density that is a sum of exactly two exponentials.

\begin{proposition}[Unified sojourn-time form for $M/PH_2/1$]\label{prop:sojourn_unified}
For any $M/PH_2/1$ queue with stability $\rho < 1$, the sojourn-time density takes the form
\begin{equation}\label{eq:sojourn_density}
f_{W_s}(t) = c_1 e^{\alpha_1 t} + c_2 e^{\alpha_2 t}, \qquad t \geq 0,
\end{equation}
where $\alpha_1, \alpha_2 < 0$ are the roots of a model-specific quadratic, and $c_1, c_2$ are explicit functions of the queue parameters determined by the partial-fraction residues. The cumulative distribution and tail probability are:
\begin{equation}\label{eq:sojourn_CDF}
P(W_s \leq t) = 1 + \frac{c_1}{\alpha_1}e^{\alpha_1 t} + \frac{c_2}{\alpha_2}e^{\alpha_2 t}.
\end{equation}
\end{proposition}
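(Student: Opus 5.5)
We work directly from the PK sojourn-time transform~\eqref{eq:sojourn_LST}. Write a general $PH_2$ transform as $\tilde{B}(s) = N(s)/D(s)$ with $D(s) = (s+\mu_1)(s+\mu_2)$ monic of degree two and $\deg N \le 1$. This covers all four models: for $Hyper_2$ we have $N(s) = q\mu_1(s+\mu_2) + (1-q)\mu_2(s+\mu_1)$, and for $Coxian_2$ we have $N(s) = \mu_1((1-q)s+\mu_2)$. Since $\tilde{B}(0)=1$, we have $N(0)=D(0)$. Multiplying numerator and denominator of~\eqref{eq:sojourn_LST} by $D(s)$ gives
\[
\widetilde{W}_s(s) = \frac{(1-\rho)\,s\,N(s)}{(s-\lambda)D(s) + \lambda N(s)}.
\]
The denominator $Q(s) = (s-\lambda)D(s)+\lambda N(s)$ is a monic cubic with $Q(0) = -\lambda D(0) + \lambda N(0) = 0$. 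Hence $Q(s) = s\,R(s)$ with $R$ a monic quadratic, and
\[
\widetilde{W}_s(s) = \frac{(1-\rho)N(s)}{R(s)},
\]
which is proper because $\deg N \le 1 < 2 = \deg R$. We also have $R(0) = Q'(0) = D(0) - \lambda D'(0) + \lambda N'(0) = D(0)\bigl(1 - \lambda E[B]\bigr) = D(0)(1-\rho) > 0$, using $-\tilde{B}'(0) = E[B]$. This confirms $\widetilde{W}_s(0)=1$.

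Next we locate the roots of $R$, which is the step we expect to be the main obstacle. We must show that $R$ has two real, distinct, negative roots $\alpha_1, \alpha_2$. We would use the characterization that $s\neq 0$ is a root of $R$ exactly when $\tilde{B}(s) = 1 - s/\lambda$, and argue on the real axis.
\begin{itemize}
\item \emph{Right half-plane.} For $\operatorname{Re}s\ge 0$, $s\ne 0$, we have $|\tilde{B}(s)|\le 1 < |1-s/\lambda|$ whenever $\operatorname{Re} s>0$. This is a Rouché-type count, which excludes roots there and on the imaginary axis.
\item \emph{First real root.} On the real axis, take the interval $(-\mu_{\min}, 0)$, where $\mu_{\min}$ is the smaller pole magnitude. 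There $\tilde{B}(s)$ is finite, convex and increases to $+\infty$ or changes sign as $s\downarrow-\mu_{\min}$. Meanwhile $1-s/\lambda$ is linear with $R(0)>0$, and $\rho<1$ means the line is steeper than $\tilde{B}$ at $0$. A sign change of $R$ therefore gives one root in $(-\mu_{\min},0)$.
\item \emph{Second real root.} A second sign change occurs between the two poles, or beyond $-\mu_{\max}$, since $R$ is monic and $R(s)\to+\infty$ as $s\to-\infty$.
\end{itemize}
Concretely, we would evaluate $R(-\mu_1)$ and $R(-\mu_2)$, which equal $\lambda N(-\mu_i)$ up to the factor $-\mu_i$, and check their signs model by model.

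If that uniform argument gets messy, the fallback is to verify directly that the discriminant of $R$ is positive for each of the four models. The $\beta$ expressions in Theorems~\ref{Theorem_1}--\ref{thm:coxian2} suggest these discriminants are sums of squares plus positive terms. Negativity then follows from the fact that $R$ has positive coefficients, because its sum and product of roots correspond to $-(\text{linear coefficient})$ and $R(0)>0$. For $E_2$ the two roots may coincide only in degenerate cases, which we would rule out explicitly.

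With distinct roots $\alpha_1\ne\alpha_2<0$, partial fractions give
\[
\widetilde{W}_s(s) = \frac{c_1}{s-\alpha_1} + \frac{c_2}{s-\alpha_2}, \qquad c_i = \frac{(1-\rho)N(\alpha_i)}{R'(\alpha_i)}.
\]
Laplace inversion then yields~\eqref{eq:sojourn_density}. Integrating from $t$ to $\infty$ gives $P(W_s>t) = -\frac{c_1}{\alpha_1}e^{\alpha_1 t} - \frac{c_2}{\alpha_2}e^{\alpha_2 t}$, and this is exactly~\eqref{eq:sojourn_CDF}. The identity $-c_1/\alpha_1 - c_2/\alpha_2 = \widetilde{W}_s(0) = 1$ confirms the normalization at $t=0$. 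Model-specific $c_i$ and $\alpha_i$ then follow by substituting each $N$ and $D$.
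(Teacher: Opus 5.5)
Your route is the paper's: clear the denominator of the PK sojourn transform, cancel the root at $s=0$, and invert a proper rational function with quadratic denominator by partial fractions. Your uniform $N/D$ bookkeeping, the identity $R(0)=D(0)(1-\rho)$, and your treatment of where the roots lie add rigor the paper omits; the paper asserts that $\alpha_1,\alpha_2<0$ without argument.

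One step is wrong as written. The right-half-plane bullet claims $|1-s/\lambda|>1$ whenever $\mathrm{Re}\,s>0$. This fails on the whole disk $|s-\lambda|\le\lambda$ (at $s=\lambda$ the quantity is $0$), so the bullet excludes nothing there. Delete it, because you do not need it.

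Since $R$ is a monic quadratic with $R(0)>0$, it suffices to exhibit one $s_0<0$ with $R(s_0)<0$. Your evaluation $R(-\mu_i)=-\lambda N(-\mu_i)/\mu_i$ supplies it:
\begin{itemize}
\item $E_2$: $R(-\mu)=-\lambda\mu$.
\item $Hypo_2$: $R(-\mu_i)=-\lambda\mu_1\mu_2/\mu_i$.
\item $Coxian_2$: $R(-\mu_2)=-\lambda q\mu_1$.
\item $Hyper_2$: $R(-\mu_{\min})$ equals $-\lambda q(\mu_2-\mu_1)$ if $\mu_1<\mu_2$, or $-\lambda(1-q)(\mu_1-\mu_2)$ if $\mu_2<\mu_1$.
\end{itemize}
This gives one root in $(s_0,0)$ and one in $(-\infty,s_0)$. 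Those are both roots of the quadratic, so they are real, distinct and negative.

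Your discriminant fallback also works. For instance, $\Delta_C=(\mu_1-\mu_2-\lambda)^2+4\lambda q\mu_1$, which is cleaner than the paper's continuity-plus-computation argument for the same Coxian discriminant in Proposition~\ref{prop:boundary}.

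Finally, your hedge about $E_2$ is misplaced. Coincident roots never arise for $E_2$, since $\Delta_E=\lambda(\lambda+4\mu)>0$. They can arise only when $Hyper_2$ or $Coxian_2$ degenerates to exponential service; for example, $q=0$ and $\mu_1-\mu_2=\lambda$ in the Coxian. In those cases $N$ and $D$ share a factor, and you must cancel it before applying $c_i=(1-\rho)N(\alpha_i)/R'(\alpha_i)$.
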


We now give the explicit coefficients for each model. In each case, the poles $\alpha_{1,2}$ are the roots of the quadratic denominator of $\widetilde{W}_s(s)$, and the residues $c_1, c_2$ follow from partial-fraction decomposition.

\subsubsection{$M/E_2/1$ sojourn time}\label{sec:soj_E2}
Substituting $\tilde{B}(s) = \mu^2/(\mu+s)^2$ into~\eqref{eq:sojourn_LST} and simplifying:
\begin{equation}\label{eq:sojourn_E2_LST}
\widetilde{W}_s(s) = \frac{(1-\rho)\mu^2}{s^2 + (2\mu-\lambda)s + \mu(\mu-2\lambda)}.
\end{equation}
The poles and residues are:
\begin{align*}
\alpha_{1,2} &= \frac{-(2\mu - \lambda) \pm \sqrt{\Delta_E}}{2}, \qquad \Delta_E = \lambda(\lambda + 4\mu), \\[4pt]
c_1 &= \frac{(1-\rho)\mu^2}{\sqrt{\Delta_E}}, \qquad c_2 = -c_1.
\end{align*}

\subsubsection{$M/Hypo_2/1$ sojourn time}\label{sec:soj_Hypo}
Substituting $\tilde{B}(s) = \mu_1\mu_2/((\mu_1+s)(\mu_2+s))$ into~\eqref{eq:sojourn_LST}:
\begin{equation}\label{eq:sojourn_Hypo_LST}
\widetilde{W}_s(s) = \frac{(1-\rho)\mu_1\mu_2}{s^2 + (\mu_1+\mu_2-\lambda)s + \mu_1\mu_2 - \lambda(\mu_1+\mu_2)}.
\end{equation}
The poles and residues are:
\begin{align*}
\alpha_{1,2} &= \frac{-(\mu_1+\mu_2-\lambda) \pm \sqrt{\Delta_{Hypo}}}{2}, \\
\Delta_{Hypo} &= (\mu_1-\mu_2)^2 + \lambda^2 + 2\lambda(\mu_1+\mu_2), \\[4pt]
c_1 &= \frac{(1-\rho)\mu_1\mu_2}{\sqrt{\Delta_{Hypo}}}, \qquad c_2 = -c_1.
\end{align*}

\subsubsection{$M/Hyper_2/1$ sojourn time}\label{sec:soj_Hyper}
Substituting $\tilde{B}(s) = q\mu_1/(\mu_1+s) + (1-q)\mu_2/(\mu_2+s)$ into~\eqref{eq:sojourn_LST}:
\begin{equation}\label{eq:sojourn_Hyper_LST}
\widetilde{W}_s(s) = \frac{(1-\rho)(s\phi + \mu_1\mu_2)}{s^2 + (\mu_1+\mu_2-\lambda)s + \mu_1\mu_2 - \lambda\psi},
\end{equation}
where $\phi = q\mu_1 + (1-q)\mu_2$ and $\psi = (1-q)\mu_1 + q\mu_2$. The linear numerator reflects the mixture structure of hyperexponential service. The poles and residues are:
\begin{align*}
\alpha_{1,2} &= \frac{-(\mu_1+\mu_2-\lambda) \pm \sqrt{\Delta_H}}{2}, \\
\Delta_H &= (\mu_1-\mu_2)^2 + \lambda^2 + 2\lambda(1-2q)(\mu_1-\mu_2), \\[4pt]
c_1 &= \frac{(1-\rho)(\alpha_1 \phi + \mu_1\mu_2)}{\sqrt{\Delta_H}}, \qquad c_2 = -\frac{(1-\rho)(\alpha_2 \phi + \mu_1\mu_2)}{\sqrt{\Delta_H}}.
\end{align*}

\subsubsection{$M/Coxian_2/1$ sojourn time}\label{sec:soj_Cox}
Substituting $\tilde{B}(s) = \mu_1((1-q)s + \mu_2)/((s+\mu_1)(s+\mu_2))$ into~\eqref{eq:sojourn_LST}:
\begin{equation}\label{eq:sojourn_Cox_LST}
\widetilde{W}_s(s) = \frac{(1-\rho)\mu_1\big((1-q)s + \mu_2\big)}{s^2 + (\mu_1+\mu_2-\lambda)s + \mu_1\mu_2 - \lambda\mu_2 - \lambda q\mu_1}.
\end{equation}
The poles and residues are:
\begin{align*}
\alpha_{1,2} &= \frac{-(\mu_1+\mu_2-\lambda) \pm \sqrt{\Delta_C}}{2}, \\
\Delta_C &= (\mu_1-\mu_2)^2 + \lambda^2 + 2\lambda\big[\mu_2 + \mu_1(2q-1)\big], \\[4pt]
c_1 &= \frac{(1-\rho)\mu_1\big((1-q)\alpha_1 + \mu_2\big)}{\sqrt{\Delta_C}}, \\
c_2 &= -\frac{(1-\rho)\mu_1\big((1-q)\alpha_2 + \mu_2\big)}{\sqrt{\Delta_C}}.
\end{align*}

\medskip

\noindent\textbf{Remark.} For the sequential models ($M/E_2/1$ and $M/Hypo_2/1$), the numerator of $\widetilde{W}_s(s)$ is constant, so the residues satisfy $c_1 = -c_2$ and the density has anti-symmetric exponential weights. For the mixture/branching models ($M/Hyper_2/1$ and $M/Coxian_2/1$), the numerator is linear in $s$, producing asymmetric residues. In all cases, $c_1/\alpha_1 + c_2/\alpha_2 = -1$, ensuring $P(W_s > 0) = 1$.

This two-exponential form contrasts with the conditioning-on-$N$ approach, which produces sojourn-time expressions involving infinite sums of incomplete gamma functions (Erlang-type services) or confluent hypergeometric functions (hyperexponential services). The PK transform route yields the same distributions in a more compact and directly usable form.

\noindent\textbf{Remark (summation identity).} The equivalence of the two approaches yields a non-trivial identity. For $M/E_2/1$, conditioning on the queue length gives $f_{W_s}(t) = \sum_{n=0}^{\infty} p_n \cdot f_{W_s|N=n}(t)$, where each conditional density is Erlang with $2(n{+}1)$ phases. Substituting $p_n = A_1 r_1^n + A_2 r_2^n$, the infinite series of polynomial-times-exponential terms collapses to exactly two pure exponentials:
\begin{equation}\label{eq:summation_identity}
\sum_{n=0}^{\infty} (A_1 r_1^n + A_2 r_2^n) \cdot \frac{\mu^{2(n+1)} t^{2n+1}}{(2n+1)!}\, e^{-\mu t} = c_1 e^{\alpha_1 t} + c_2 e^{\alpha_2 t}.
\end{equation}
This identity, connecting geometric mixtures of Erlang densities to a sum of two exponentials, holds for each of the four models (with appropriate conditional densities). Its validity provides a cross-check between the queue-length and sojourn-time results: any error in the $p_n$ coefficients would propagate into the left-hand side and destroy the equality. Conversely, the PK sojourn-time transform provides the right-hand side directly, bypassing the infinite summation entirely.

Table~\ref{tab:sojourn_reference} consolidates the sojourn-time coefficients for all four models as a quick reference.

\begin{table}[!ht]
\centering
\caption{Sojourn-time coefficients for $M/PH_2/1$ queues. In all models, $f_{W_s}(t) = c_1 e^{\alpha_1 t} + c_2 e^{\alpha_2 t}$ and $P(W_s \leq t) = 1 + (c_1/\alpha_1)e^{\alpha_1 t} + (c_2/\alpha_2)e^{\alpha_2 t}$, with poles $\alpha_{1,2} = (-p \pm \sqrt{\Delta})/2$.}
\label{tab:sojourn_reference}
\footnotesize
\setlength{\tabcolsep}{3.5pt}
\begin{tabular}{@{} l cccc @{}}
\toprule
& $M/E_2/1$ & $M/Hypo_2/1$ & $M/Hyper_2/1$ & $M/Coxian_2/1$ \\
\midrule
$p$ & $2\mu{-}\lambda$ & $\mu_1{+}\mu_2{-}\lambda$ & $\mu_1{+}\mu_2{-}\lambda$ & $\mu_1{+}\mu_2{-}\lambda$ \\[1.2em]
$\Delta$ & $\lambda(\lambda{+}4\mu)$
  & $\begin{array}{@{}c@{}} (\mu_1{-}\mu_2)^2 + \lambda^2 \\ {}+ 2\lambda(\mu_1{+}\mu_2) \end{array}$
  & $\begin{array}{@{}c@{}} (\mu_1{-}\mu_2)^2 + \lambda^2 \\ {}+ 2\lambda(1{-}2q)(\mu_1{-}\mu_2) \end{array}$
  & $\begin{array}{@{}c@{}} (\mu_1{-}\mu_2)^2 + \lambda^2 \\ {}+ 2\lambda[\mu_2 {+} \mu_1(2q{-}1)] \end{array}$ \\[1.8em]
$c_1$ & $\dfrac{(1{-}\rho)\mu^2}{\sqrt{\Delta}}$
  & $\dfrac{(1{-}\rho)\mu_1\mu_2}{\sqrt{\Delta}}$
  & $\dfrac{(1{-}\rho)(\alpha_1\phi {+} \mu_1\mu_2)}{\sqrt{\Delta}}$
  & $\dfrac{(1{-}\rho)\mu_1\!\left((1{-}q)\alpha_1 {+} \mu_2\right)}{\sqrt{\Delta}}$ \\[1.8em]
$c_2$ & $-c_1$ & $-c_1$
  & $-\dfrac{(1{-}\rho)(\alpha_2\phi {+} \mu_1\mu_2)}{\sqrt{\Delta}}$
  & $-\dfrac{(1{-}\rho)\mu_1\!\left((1{-}q)\alpha_2 {+} \mu_2\right)}{\sqrt{\Delta}}$ \\
\bottomrule
\end{tabular}

\vspace{0.5em}
\begin{minipage}{0.92\textwidth}
\footnotesize
For $M/Hyper_2/1$: $\phi = q\mu_1 + (1-q)\mu_2$. Sequential models ($E_2$, $Hypo_2$) have symmetric residues $c_1 = -c_2$; mixture/branching models ($Hyper_2$, $Coxian_2$) have asymmetric residues. Normalization: $c_1/\alpha_1 + c_2/\alpha_2 = -1$.
\end{minipage}
\end{table}

\subsection{Structural Summary}\label{sec:structural}

Across all four models, the results share a unified structure:

\begin{itemize}
\item \textbf{Queue-length distribution}: $p_n = A_1 r_1^n + A_2 r_2^n$ (sum of two geometrics), where the quadratic denominator of $P(z)$ determines $r_1, r_2$ and the partial-fraction residues give $A_1, A_2$.

\item \textbf{Sojourn-time density}: $f_{W_s}(t) = c_1 e^{\alpha_1 t} + c_2 e^{\alpha_2 t}$ (sum of two exponentials), where the quadratic denominator of $\widetilde{W}_s(s)$ determines $\alpha_1, \alpha_2$ and the residues give $c_1, c_2$.
\end{itemize}

In both cases, the two-term structure is a direct consequence of the $PH_2$ service having a rational transform of degree 2: the generating function's denominator is a quadratic in $z$, and the LST's denominator (after clearing and cancelling) is a quadratic in $s$. This structure cannot extend beyond $PH_2$ without encountering higher-degree polynomials (see Section~\ref{sec:boundary}).

The Coxian-2 distribution provides a unifying framework for sequential models: it reduces to $M/Hypo_2/1$ at $q=1$ and to $M/E_2/1$ when additionally $\mu_1=\mu_2$. Notably, Coxian distributions fall entirely outside the hypo/hyper framework treated by \citet{marin2014explicit}, and the $M/Coxian_2/1$ results in Theorem~\ref{thm:coxian2} have not, to the best of our knowledge, previously been presented in explicit scalar form. The $M/Hyper_2/1$ model, with its parallel branching structure, is distributionally distinct from the Coxian family and requires separate treatment.

\section{Numerical Validation and Accuracy Characterization}\label{sec:butools}

To validate the closed-form solutions and to characterize the practical strengths and limitations of the cyclic reduction algorithm implemented in BuTools \citep{horvath2012butools}, we conduct comprehensive experiments. All computations used Python~3.13 on a workstation with an Intel Core Ultra~7 165U processor and 32~GB RAM. Computation times are medians of 10 independent runs after 2 warmup iterations.

\subsection{QBD Matrix Structure}\label{sec:qbd}

For readers who wish to reproduce the matrix-analytic computations or use BuTools for their own $M/PH_2/1$ analysis, Table~\ref{tab:mg_matrices} provides the Quasi-Birth-Death (QBD) matrices for all four models. For an $M/PH_2/1$ queue with arrival rate $\lambda$ and two-phase service, the steady-state probabilities satisfy $\pi_n = \pi_1 R^{n-1}$ for $n \geq 1$, where $R$ is the minimal nonnegative solution to $R^2A_2 + RA_1 + A_0 = 0$.

\begin{table}[!ht]
\centering
\caption{QBD matrices for two-phase phase-type queues. Here $A_0$ represents arrivals, $A_1$ within-level transitions, $A_2$ service completions, and $\alpha$ the initial phase distribution upon entering service.}
\label{tab:mg_matrices}
\small
\begin{tabular}{@{}lcccc@{}}
\toprule
\textbf{Queue} & $\boldsymbol{A_0}$ & $\boldsymbol{A_1}$ & $\boldsymbol{A_2}$ & $\boldsymbol{\alpha}$ \\
\midrule
$M/E_2/1$ & $\begin{bmatrix}\lambda & 0\\0 & \lambda\end{bmatrix}$ &
$\begin{bmatrix}-(\lambda+\mu) & \mu\\0 & -(\lambda+\mu)\end{bmatrix}$ &
$\begin{bmatrix}0 & 0\\\mu & 0\end{bmatrix}$ &
$[1, 0]$ \\[0.8em]
$M/Hypo_2/1$ & $\begin{bmatrix}\lambda & 0\\0 & \lambda\end{bmatrix}$ &
$\begin{bmatrix}-(\lambda+\mu_1) & \mu_1\\0 & -(\lambda+\mu_2)\end{bmatrix}$ &
$\begin{bmatrix}0 & 0\\\mu_2 & 0\end{bmatrix}$ &
$[1, 0]$ \\[0.8em]
$M/Hyper_2/1$ & $\begin{bmatrix}\lambda & 0\\0 & \lambda\end{bmatrix}$ &
$\begin{bmatrix}-(\lambda+\mu_1) & 0\\0 & -(\lambda+\mu_2)\end{bmatrix}$ &
$\begin{bmatrix}q\mu_1 & (1-q)\mu_1\\q\mu_2 & (1-q)\mu_2\end{bmatrix}$ &
$[q, 1-q]$ \\[0.8em]
$M/Coxian_2/1$ & $\begin{bmatrix}\lambda & 0\\0 & \lambda\end{bmatrix}$ &
$\begin{bmatrix}-(\lambda+\mu_1) & q\mu_1\\0 & -(\lambda+\mu_2)\end{bmatrix}$ &
$\begin{bmatrix}(1-q)\mu_1 & 0\\\mu_2 & 0\end{bmatrix}$ &
$[1, 0]$ \\
\bottomrule
\end{tabular}
\end{table}

The basic fixed-point iteration for computing $R$ uses $R_{k+1} = -(A_1 + R_kA_2)^{-1}A_0$ starting from $R_0 = 0$. Production implementations such as BuTools employ cyclic reduction \citep{bini2006numerical}, which achieves quadratic convergence and superior numerical stability.

Traffic intensities for the four models are: $M/E_2/1$: $\rho = 2\lambda/\mu$; $M/Hypo_2/1$: $\rho = \lambda(1/\mu_1 + 1/\mu_2)$; $M/Hyper_2/1$: $\rho = \lambda(q/\mu_1 + (1-q)/\mu_2)$; $M/Coxian_2/1$: $\rho = \lambda(1/\mu_1 + q/\mu_2)$.

\subsection{Accuracy Validation}\label{sec:validation}

Figure~\ref{fig:butools_accuracy} presents the relative error between the matrix-analytic computation and our closed-form solutions for all four models. BuTools achieves machine-precision accuracy (relative errors at or below $10^{-14}$) across all tested configurations, including traffic intensities up to $\rho = 0.99$ and varying coefficients of variation.

\begin{figure}[!ht]
    \centering
    \includegraphics[width=0.85\linewidth]{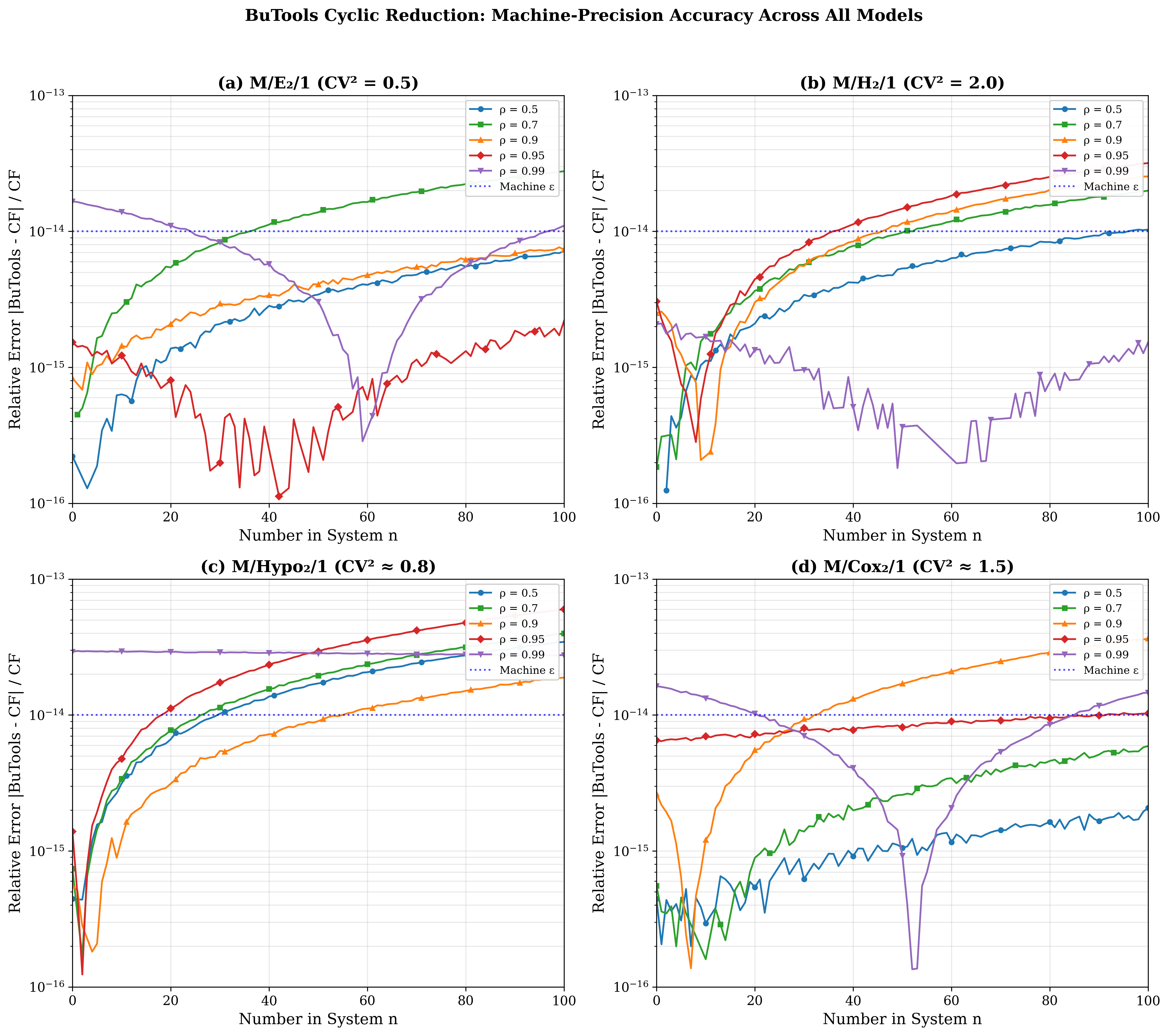}
    \caption{Cyclic reduction achieves machine-precision accuracy across all two-phase models. Relative errors remain at or below $10^{-14}$ for all traffic intensities up to $\rho = 0.99$.}
    \label{fig:butools_accuracy}
\end{figure}

This establishes two results: the correctness of our closed-form derivations through independent numerical verification, and the remarkable robustness of cyclic reduction, which maintains machine-precision accuracy even at near-critical traffic intensities where naive implementations would struggle \citep{latouche1999introduction,harchol2013performance,baron2025re}.

\subsection{Computational Characteristics}\label{sec:complexity}

The closed-form expression $p_n = A_1 r_1^n + A_2 r_2^n$ evaluates in $O(1)$ time for any $n$: once the coefficients and geometric ratios are computed from the system parameters (a one-time setup), each individual probability requires only two exponentiations and an addition. By contrast, the matrix-analytic approach first computes the rate matrix $R$ (a one-time setup via cyclic reduction), after which evaluating $\pi_n = \pi_1 R^{n-1}$ requires $O(n)$ matrix-vector multiplications.

Figure~\ref{fig:speedup} quantifies the evaluation-time ratio across all four models for $n$ ranging from 1 to 10{,}000. For small $n$, the speedup is modest. As $n$ increases, the linear growth in matrix-power computation versus constant time for the closed-form evaluation produces speedups exceeding $100{,}000\times$ at $n = 10{,}000$.

\begin{figure}[!ht]
    \centering
    \includegraphics[width=0.85\linewidth]{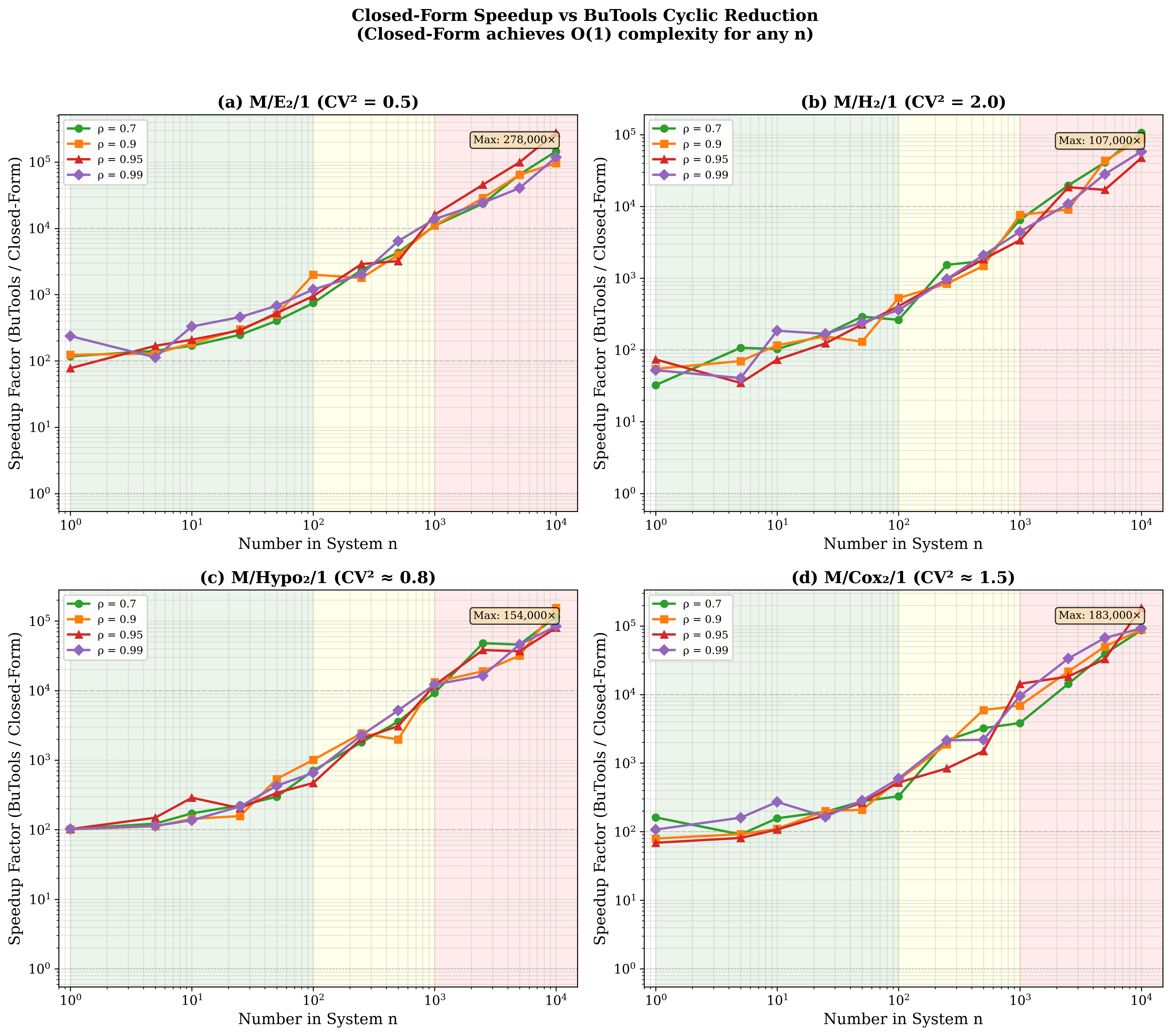}
    \caption{Evaluation-time ratio: closed-form solution versus cyclic reduction. Both approaches require a one-time setup (parameter computation for closed form; rate-matrix computation for the matrix-analytic route). The ratio reflects the per-query cost difference: $O(1)$ versus $O(n)$.}
    \label{fig:speedup}
\end{figure}

This difference is most relevant for applications requiring tail probabilities at large queue lengths, sensitivity analyses involving repeated evaluations across parameter ranges, or real-time decision systems with latency constraints. For moderate queue lengths, both approaches are effectively instantaneous.

Table~\ref{tab:complexity_comparison} summarizes the computational characteristics.

\begin{table}[!ht]
\centering
\caption{Computational characteristics: matrix-analytic computation versus closed-form solutions.}
\label{tab:complexity_comparison}
\footnotesize
\begin{tabular}{@{}lcc@{}}
\toprule
\textbf{Metric} & \textbf{BuTools (cyclic reduction)} & \textbf{Closed-form} \\
\midrule
\multicolumn{3}{@{}l@{}}{\textit{Setup (one-time)}} \\
Rate matrix / coefficients & $O(k^3 \log(1/\epsilon))$$^{*}$ & $O(1)$ \\
\midrule
\multicolumn{3}{@{}l@{}}{\textit{Per-query cost}} \\
Single $p_n$ & $O(nk^2)$ & $O(1)$ \\
All $p_0, \ldots, p_N$ & $O(N k^2)$ & $O(N)$ \\
$P(W_s \leq t)$ at $M$ points & $O(M k^3)$$^{\ddagger}$ & $O(M)$ \\
\midrule
\multicolumn{3}{@{}l@{}}{\textit{Accuracy}} \\
Queue-length $p_n$ & $\sim 10^{-14}$ & Exact$^{\dagger}$ \\
Sojourn-time $P(W_s \leq t)$ & $\sim 10^{-15}$ to $10^{-6}$$^{\S}$ & Exact$^{\dagger}$ \\
\midrule
\multicolumn{3}{@{}l@{}}{\textit{Applicability}} \\
Phase count $k$ & Any $k$ & $k = 2$ only \\
\midrule
\multicolumn{3}{@{}l@{}}{\textit{Analytical operations}} \\
Differentiation, optimization & Finite differences & Exact \\
\bottomrule
\end{tabular}

\vspace{0.5em}
\begin{minipage}{0.85\textwidth}
\footnotesize
$^{*}$Cyclic reduction for computing $R$; effectively constant for fixed $k=2$.\\[2pt]
$^{\dagger}$Subject to floating-point precision in numerical evaluation.\\[2pt]
$^{\ddagger}$Matrix exponential $e^{\boldsymbol{S}t}$ via Pad\'{e} approximation; fixed cost per time point.\\[2pt]
$^{\S}$Degrades at jointly extreme $\rho$ and $C_s^2$; see Section~\ref{sec:sojourn_validation}.
\end{minipage}
\end{table}

\subsection{Robustness Across the Variability Spectrum}\label{sec:variability}

To systematically characterize how service variability affects performance, we evaluated the $M/Hyper_2/1$ model across a wide range of $C_s^2$ values and traffic intensities.

\begin{figure}[!ht]
\centering
\includegraphics[width=0.9\linewidth]{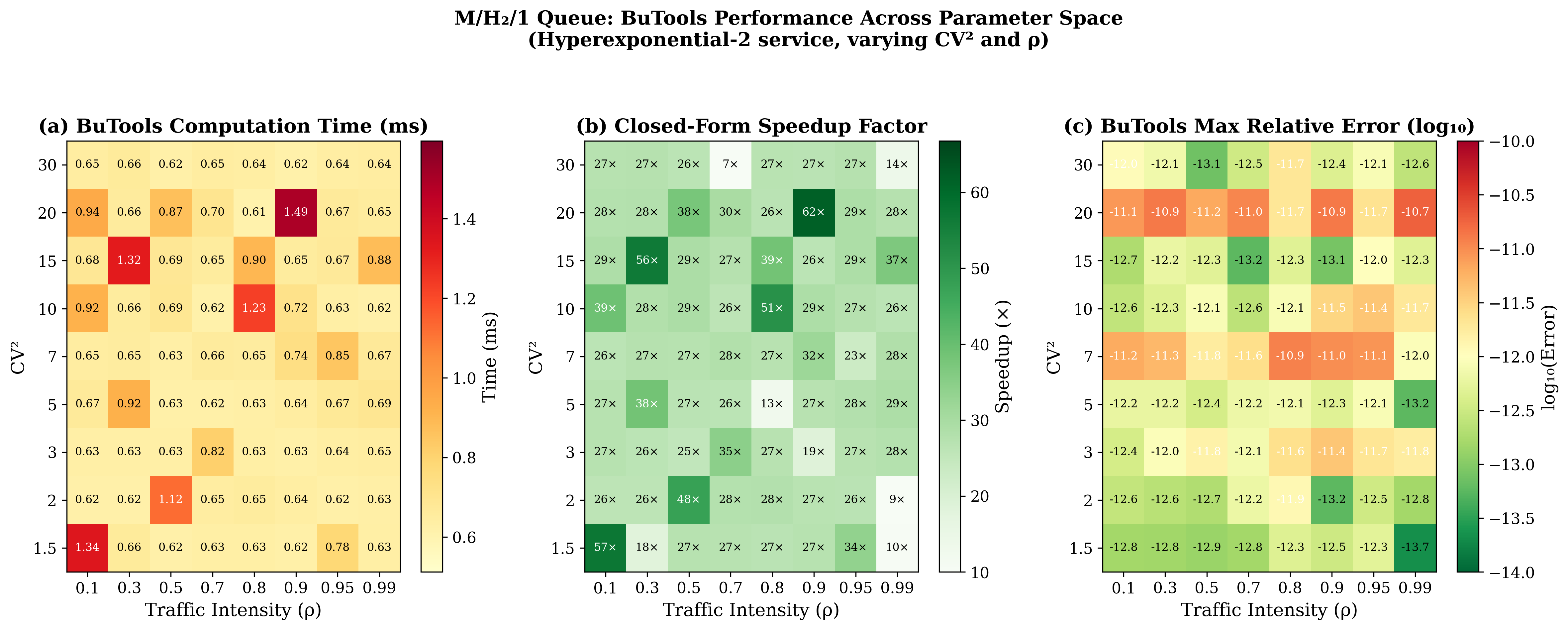}
\caption{Matrix-analytic performance across the parameter space for $M/Hyper_2/1$ queues. (a) Computation time remains stable across all parameters. (b) Closed-form evaluation-time ratio varies modestly with parameters. (c) Maximum relative error stays within machine precision ($10^{-10}$ to $10^{-14}$) throughout.}
\label{fig:butools_parameter_space}
\end{figure}

Figure~\ref{fig:butools_parameter_space} demonstrates that computation time remains remarkably stable (1.2--3.2~ms) regardless of traffic intensity or service variability. Maximum relative errors remain bounded between $10^{-14}$ and $10^{-10}$ across all configurations in this wider parameter range.

We further stress-tested the implementation at extreme variability (up to $C_s^2 = 100{,}000$, Figure~\ref{fig:extreme_cv2}) and near-critical traffic ($\rho$ up to 0.999, Figure~\ref{fig:high_rho_stress}). In both regimes, accuracy is maintained well beyond any practical requirement.

\begin{figure}[!ht]
\centering
\begin{subfigure}[t]{0.48\linewidth}
    \centering
    \includegraphics[width=\linewidth]{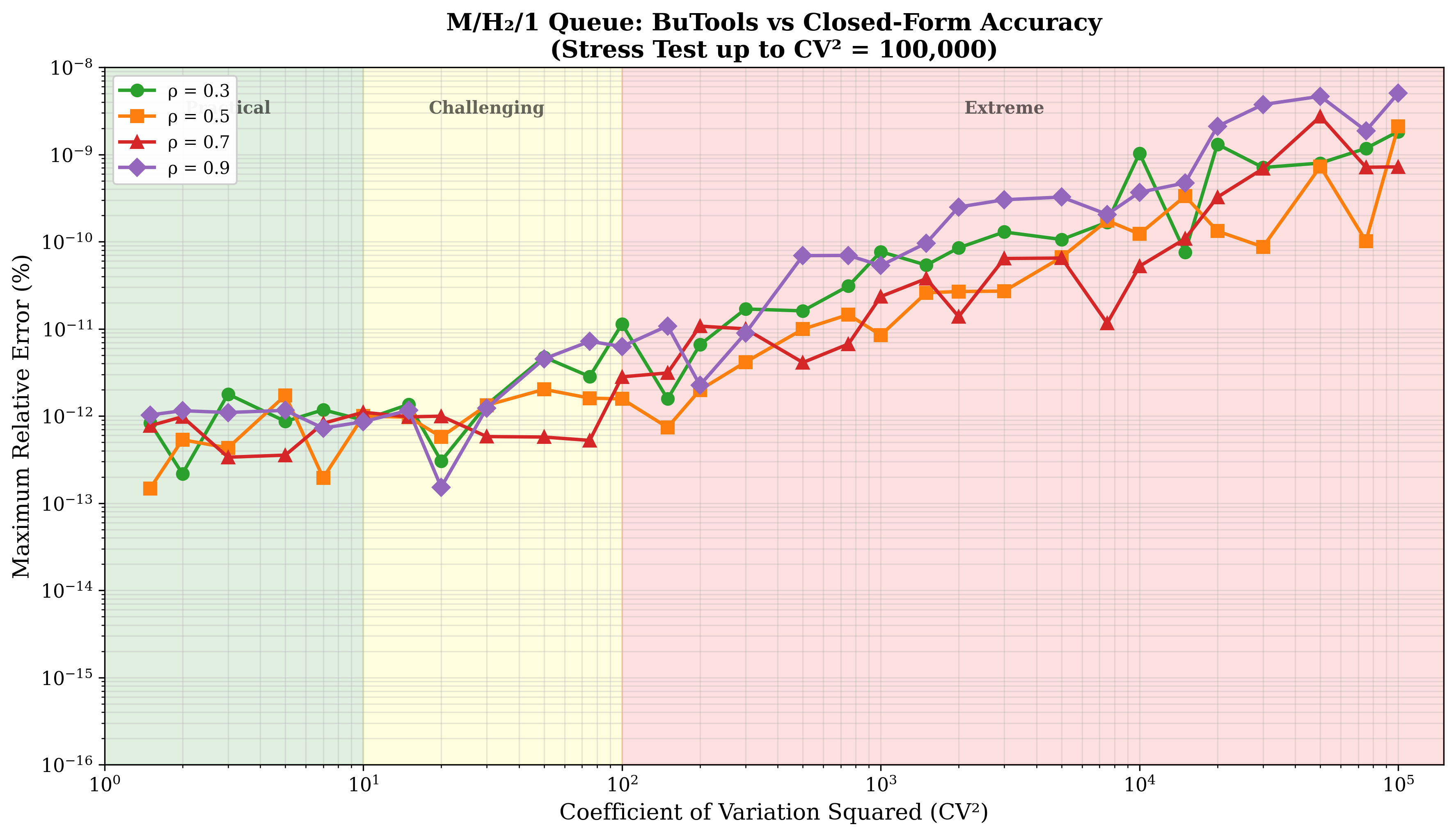}
    \caption{Extreme service variability}
    \label{fig:extreme_cv2}
\end{subfigure}
\hfill
\begin{subfigure}[t]{0.48\linewidth}
    \centering
    \includegraphics[width=\linewidth]{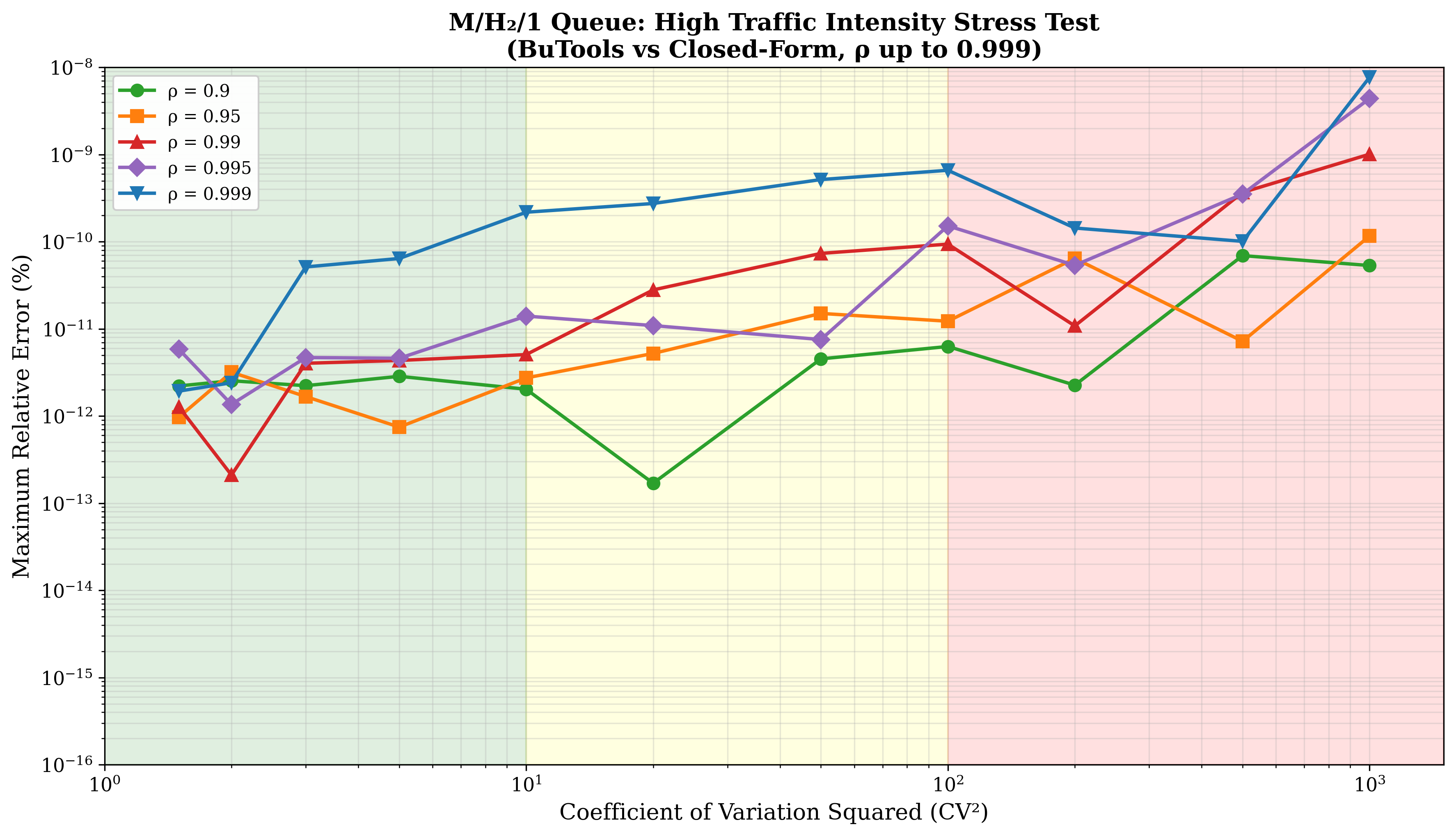}
    \caption{High traffic intensities}
    \label{fig:high_rho_stress}
\end{subfigure}
\caption{Stress tests for $M/Hyper_2/1$ queues. (a)~Accuracy under extreme service variability: even at $C_s^2 = 100{,}000$, relative errors remain below $10^{-8}\%$. (b)~Accuracy at high traffic intensities: machine-precision accuracy is maintained even at $\rho = 0.999$ across all variability levels. Together, these results confirm that cyclic reduction is numerically robust far beyond any practical requirement.}
\label{fig:stress_tests}
\end{figure}

\subsection{Sojourn-Time Validation}\label{sec:sojourn_validation}

The preceding subsections validate the matrix-analytic computation against our closed-form queue-length distributions. We now turn to sojourn times, where the computational path is fundamentally different and the existing literature suggests the question is not trivial.

\paragraph{Two literatures, one gap}
The numerical accuracy of matrix-analytic queueing computations has been studied extensively, but from two largely separate directions. One body of work addresses steady-state distributions via rate matrices: logarithmic reduction and cyclic reduction for QBDs \citep{latouche1993logarithmic,bini2006numerical}, with recent work targeting accuracy improvements \citep{gu2022highly}. A second body of work addresses waiting-time and sojourn-time distributions via transform inversion \citep{abate1995numerical,shortle2007waiting}. Both literatures study the \emph{methods}. Neither validates a specific computational tool against exact closed-form sojourn-time benchmarks. Our closed-form results make such a validation possible for the first time.

\paragraph{Why sojourn times test a different numerical primitive}
For queue lengths, BuTools computes $\pi_n = \pi_1 R^{n-1}$ via iterated matrix powers, a well-conditioned operation. For sojourn times, it constructs a matrix-exponential representation $(\boldsymbol{\alpha}, \boldsymbol{S})$ such that $P(W_s \leq t) = 1 - \boldsymbol{\alpha} e^{\boldsymbol{S}t} \mathbf{1}$, which requires computing $e^{\boldsymbol{S}t}$. As \citet{moler2003nineteen} documented and \citet{higham2005scaling} subsequently addressed, the matrix exponential has well-known numerical sensitivities when eigenvalues are close, $t$ is large, or the matrix is ill-conditioned. These are precisely the conditions that arise in queueing applications at extreme parameters. The sojourn-time validation therefore tests two numerical primitives in sequence: cyclic reduction for the Riccati equation (structural solve) and the matrix exponential for CDF evaluation.

\paragraph{Experimental setup}
We compare the BuTools sojourn-time output (via \texttt{MMAPPH1FCFS} with the \texttt{stDistr} option) against our closed-form CDF $P(W_s \leq t) = 1 + (c_1/\alpha_1)e^{\alpha_1 t} + (c_2/\alpha_2)e^{\alpha_2 t}$ using the coefficients from Table~\ref{tab:sojourn_reference}. We evaluate at time points $t \in \{0.1, 0.5, 1, 2, 5, 10, 20, 50\}$ across the same parameter ranges used for queue-length validation.

For readers wishing to reproduce the sojourn-time validation or apply the matrix-analytic route to their own $M/PH_2/1$ sojourn-time analysis, Table~\ref{tab:ph_matrices} provides the phase-type representations $(\boldsymbol{\sigma}, \boldsymbol{S})$ required by \texttt{MMAPPH1FCFS}. The arrival process is represented as an MMAP with $D_0 = [-\lambda]$ and $D_1 = [\lambda]$ for all models.

\begin{table}[!ht]
\centering
\caption{Phase-type representations $(\boldsymbol{\sigma}, \boldsymbol{S})$ for sojourn-time computation via \texttt{MMAPPH1FCFS}. Here $\boldsymbol{\sigma}$ is the initial phase distribution and $\boldsymbol{S}$ is the sub-generator matrix. These complement the QBD matrices in Table~\ref{tab:mg_matrices}: while that table supports queue-length computation via $\pi_n = \pi_1 R^{n-1}$, this table supports sojourn-time computation via $P(W_s \leq t) = 1 - \boldsymbol{\alpha}\, e^{\boldsymbol{S}t}\mathbf{1}$.}
\label{tab:ph_matrices}
\small
\begin{tabular}{@{}lcc@{}}
\toprule
\textbf{Queue} & $\boldsymbol{\sigma}$ & $\boldsymbol{S}$ \\
\midrule
$M/E_2/1$ & $[1,\; 0]$ & $\begin{bmatrix}-\mu & \mu\\0 & -\mu\end{bmatrix}$ \\[0.8em]
$M/Hypo_2/1$ & $[1,\; 0]$ & $\begin{bmatrix}-\mu_1 & \mu_1\\0 & -\mu_2\end{bmatrix}$ \\[0.8em]
$M/Hyper_2/1$ & $[q,\; 1{-}q]$ & $\begin{bmatrix}-\mu_1 & 0\\0 & -\mu_2\end{bmatrix}$ \\[0.8em]
$M/Coxian_2/1$ & $[1,\; 0]$ & $\begin{bmatrix}-\mu_1 & q\mu_1\\0 & -\mu_2\end{bmatrix}$ \\
\bottomrule
\end{tabular}
\end{table}

\paragraph{Results: accuracy across models}
Figure~\ref{fig:sojourn_accuracy_line} shows the maximum relative error for all four models as a function of $\rho$. At moderate traffic ($\rho \leq 0.9$), all models achieve 14--15 digits of agreement, comparable to the queue-length results. As $\rho$ increases beyond 0.95, errors grow monotonically, reaching $10^{-10}$ at $\rho = 0.999$ and $10^{-9}$ at $\rho = 0.9999$ for $M/E_2/1$. This degradation was entirely absent in the queue-length validation.

\begin{figure}[!ht]
    \centering
    \includegraphics[width=0.85\linewidth]{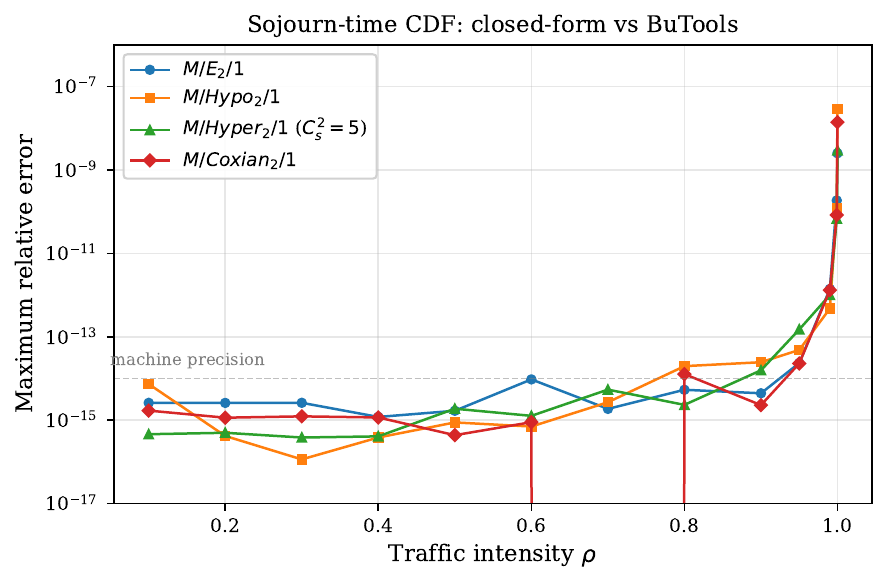}
    \caption{Sojourn-time CDF accuracy: closed-form versus matrix-analytic computation across traffic intensity for all four $M/PH_2/1$ models. Machine precision is maintained up to $\rho \approx 0.95$; beyond this, errors grow monotonically, in contrast to the queue-length validation where machine precision held throughout.}
    \label{fig:sojourn_accuracy_line}
\end{figure}

\paragraph{Results: joint effect of $\rho$ and $C_s^2$}
Figure~\ref{fig:sojourn_accuracy_heatmap} reveals a two-dimensional accuracy landscape. Panel~(a) confirms that all four models behave similarly at matched $\rho$. Panel~(b) shows the joint effect for $M/Hyper_2/1$: accuracy degrades along \emph{both} axes, from $10^{-16}$ (machine precision) at $\rho=0.5$, $C_s^2=2$ to $10^{-6}$ at $\rho=0.999$, $C_s^2=10{,}000$. This represents a loss of ten significant digits relative to the queue-length validation at the same parameters. The degradation is smooth and monotonic rather than abrupt, consistent with the gradual conditioning loss expected from the matrix-exponential computation.

\begin{figure}[!ht]
    \centering
    \includegraphics[width=0.9\linewidth]{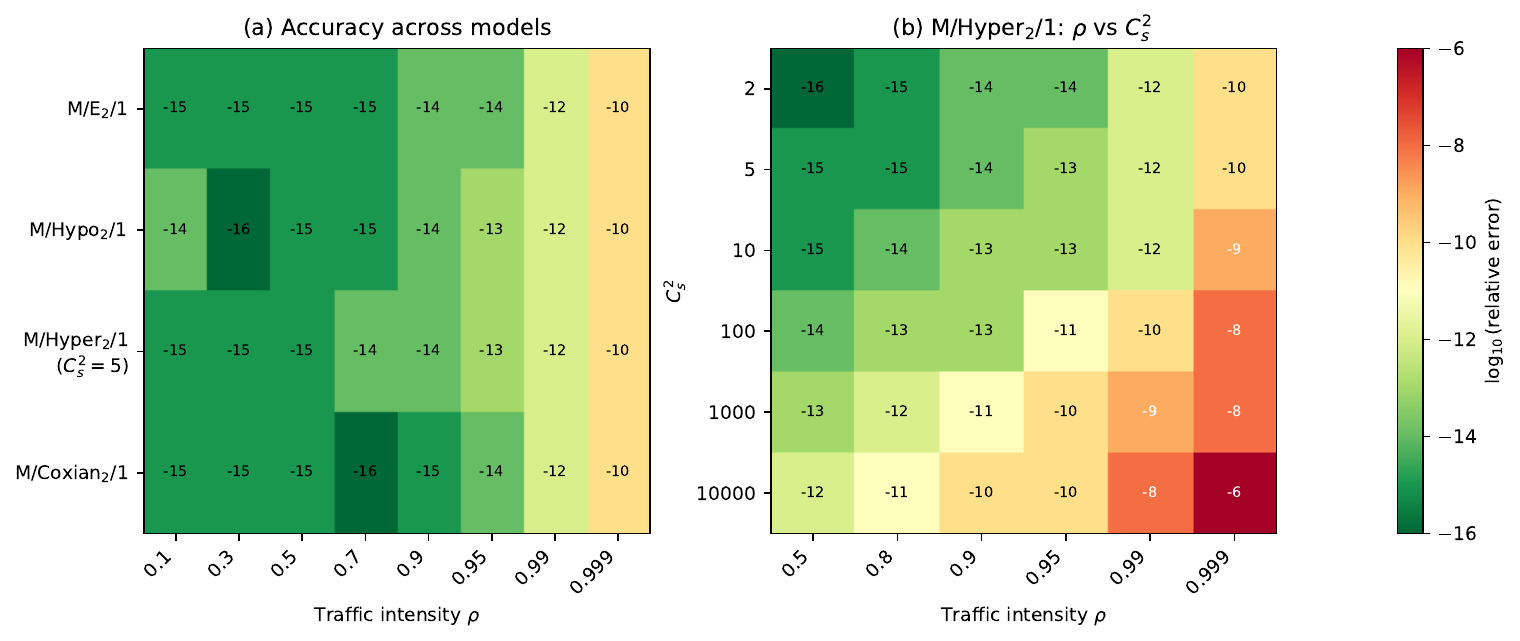}
    \caption{Sojourn-time accuracy heatmaps ($\log_{10}$ relative error). (a) All four models across $\rho$. (b) $M/Hyper_2/1$ across the joint $\rho \times C_s^2$ parameter space. The gradient from $-16$ (machine precision, dark) to $-6$ (light/red) reveals that high traffic intensity and high service variability jointly degrade the matrix-exponential computation.}
    \label{fig:sojourn_accuracy_heatmap}
\end{figure}

Figure~\ref{fig:sojourn_extreme} provides a more granular view of this interaction effect for the $M/Hyper_2/1$ model. Panel~(a) shows error versus $C_s^2$ at fixed $\rho$: at $\rho = 0.5$, accuracy remains at machine precision across all $C_s^2$, while at $\rho = 0.999$, errors rise sharply with $C_s^2$, reaching $10^{-7}$. Panel~(b) shows error versus $\rho$ at fixed $C_s^2$: at $C_s^2 = 2$, accuracy holds until $\rho > 0.99$, while at $C_s^2 = 10{,}000$, degradation begins earlier and reaches $10^{-6}$. The two panels together confirm that the accuracy loss is a multiplicative effect of both parameters, not driven by either one in isolation.

\begin{figure}[!ht]
    \centering
    \includegraphics[width=0.9\linewidth]{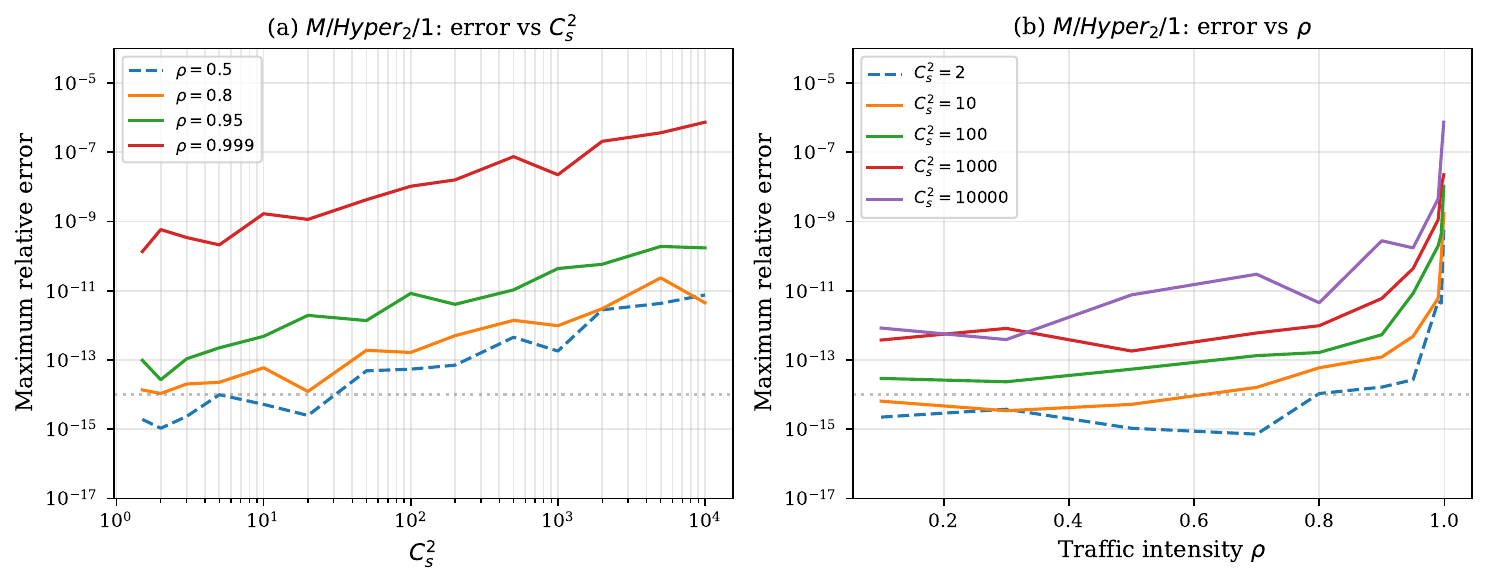}
    \caption{Detailed $M/Hyper_2/1$ accuracy analysis. (a) Error versus $C_s^2$ at fixed $\rho$: moderate traffic maintains machine precision regardless of variability. (b) Error versus $\rho$ at fixed $C_s^2$: high variability amplifies the degradation at high traffic intensity. Together, the panels confirm a multiplicative interaction between the two parameters.}
    \label{fig:sojourn_extreme}
\end{figure}

\paragraph{Results: tail behavior}
A notable contrast with queue-length computation emerges in the tails. For queue lengths, the iterative computation of $\pi_1 R^{n-1}$ eventually underflows at large $n$, and this is where closed forms provide their primary advantage (Section~\ref{sec:complexity}). For sojourn times, the situation is reversed: the computed tail probabilities $P(W_s > t)$ agree with the closed form to machine precision even at very large $t$, with agreement holding for values as small as $10^{-300}$ at $\rho = 0.5$ ($M/E_2/1$). This is expected: evaluating $e^{\boldsymbol{S}t}$ for large $t$ does not accumulate errors in the same way that iterating $R^{n-1}$ does for large $n$. Instead, the sojourn-time accuracy challenge lies in the conditioning of the Riccati solve and matrix exponential at extreme parameters, as the preceding paragraphs demonstrate.

\paragraph{Results: timing}
The closed-form CDF evaluates 1{,}000 time points in 0.7~ms versus 95~ms for the matrix-analytic route, a speedup of approximately $130\times$ (medians of 20 runs, 3 warmup). Unlike the queue-length case where closed forms offered an asymptotic $O(1)$ versus $O(n)$ advantage, the sojourn-time speedup is a constant factor: the matrix-analytic route performs a fixed-cost computation (Riccati solve plus matrix exponential) regardless of the time point. The speedup is nonetheless relevant for applications requiring repeated CDF evaluation, such as Monte Carlo simulation or service-level optimization.

\paragraph{Remark} For practitioners without access to a matrix-analytic library, sojourn-time distributions could in principle be obtained by numerically inverting the PK sojourn-time LST~\eqref{eq:sojourn_LST}, an approach that carries its own numerical challenges \citep{abate1995numerical,shortle2007waiting}. Our closed-form two-exponential results render this entire numerical inversion pipeline unnecessary for the $PH_2$ case.

\subsection{Practical Guidance}\label{sec:guidance}

The queue-length and sojourn-time validations reveal complementary strengths and limitations, summarized in Table~\ref{tab:complementary}.

\begin{table}[!ht]
\centering
\caption{Complementary validation findings: queue-length versus sojourn-time distributions. The two computations use different numerical primitives and exhibit different failure modes, each addressable by the corresponding closed-form solution.}
\label{tab:complementary}
\small
\begin{tabular}{@{}lcc@{}}
\toprule
& \textbf{Queue length $p_n$} & \textbf{Sojourn time $P(W_s \leq t)$} \\
\midrule
\textbf{Numerical primitive} & Matrix powers $R^{n-1}$ & Matrix exponential $e^{\boldsymbol{S}t}$ \\[3pt]
\textbf{Accuracy} & Machine precision & Machine precision \\
\textbf{(moderate parameters)} & ($\sim 10^{-15}$) & ($\sim 10^{-15}$) \\[3pt]
\textbf{Accuracy} & Machine precision & Degrades to $\sim 10^{-6}$ \\
\textbf{(extreme $\rho \times C_s^2$)} & ($\sim 10^{-15}$) & (high $\rho$ + high $C_s^2$) \\[3pt]
\textbf{Failure mode} & Tail underflow: $R^{n-1} \to 0$ & Conditioning loss \\
& at large $n$ & in $e^{\boldsymbol{S}t}$ at extreme parameters \\[3pt]
\textbf{Tail behavior} & Degrades at large $n$ & No degradation \\
& (underflow to zero) & (precise to $P(W_s > t) \sim 10^{-300}$) \\[3pt]
\textbf{Closed-form} & Large $n$: $O(1)$ tail & Extreme parameters: exact \\
\textbf{advantage} & evaluation via $A_i r_i^n$ & evaluation via $c_i e^{\alpha_i t}$ \\[3pt]
\textbf{Speedup} & $O(1)$ vs $O(n)$ (asymptotic) & ${\sim}130\times$ (constant factor) \\
\bottomrule
\end{tabular}
\end{table}

The key finding is that queue-length and sojourn-time computations exhibit complementary failure profiles. For queue lengths, numerical accuracy is not a differentiator: machine precision holds throughout, and the closed-form advantage lies in $O(1)$ tail evaluation at large $n$ where matrix-power iteration underflows. For sojourn times, the matrix exponential introduces conditioning sensitivity at extreme parameters, reaching $10^{-6}$ relative error in the most extreme regime tested, precisely where closed-form expressions maintain exactness. Both advantages apply only to the two-phase case; for higher-order models, matrix-analytic computation remains the appropriate tool.

\section{Analytical Applications}\label{sec:applications}

\subsection{Exact Evaluation of Threshold-Dependent Objectives}\label{sec:threshold}

With closed-form expressions $p_n = A_1 r_1^n + A_2 r_2^n$, threshold-dependent quantities evaluate exactly:
\begin{align}
P(N > K) &= \frac{A_1 r_1^{K+1}}{1-r_1} + \frac{A_2 r_2^{K+1}}{1-r_2}, \label{eq:tail_exact} \\[4pt]
E[(N-K)^+] &= \frac{A_1 r_1^{K+1}}{(1-r_1)^2} + \frac{A_2 r_2^{K+1}}{(1-r_2)^2}, \label{eq:overflow_exact} \\[4pt]
F_N(n) &= \frac{A_1(1-r_1^{n+1})}{1-r_1} + \frac{A_2(1-r_2^{n+1})}{1-r_2}. \label{eq:cdf_exact}
\end{align}
The exact CDF~\eqref{eq:cdf_exact} enables capacity sizing: finding the smallest buffer $n^*$ such that $P(N \leq n^*) \geq 1 - \epsilon$ requires only bisection on an exact expression. The conditional expectation $E[N \mid N > K]$ is also available in closed form:
\begin{equation}\label{eq:conditional_exact}
E[N \mid N > K] = \frac{\frac{A_1 r_1^{K+1}(1 + K(1-r_1))}{(1-r_1)^2} + \frac{A_2 r_2^{K+1}(1 + K(1-r_2))}{(1-r_2)^2}}{\frac{A_1 r_1^{K+1}}{1-r_1} + \frac{A_2 r_2^{K+1}}{1-r_2}}.
\end{equation}

\subsection{Sensitivity Analysis}\label{sec:sensitivity}

Since $A_i$ and $r_i$ in Theorems~\ref{Theorem_1}--\ref{thm:coxian2} are explicit functions of $\rho$, the tail probability~\eqref{eq:tail_exact} can be differentiated directly:
\begin{equation}\label{eq:tail_sensitivity}
\frac{\partial P(N > K)}{\partial\rho} = \frac{\partial}{\partial\rho}\left[\frac{A_1 r_1^{K+1}}{1-r_1} + \frac{A_2 r_2^{K+1}}{1-r_2}\right].
\end{equation}
The chain rule yields an exact closed-form expression involving $\partial A_i/\partial\rho$ and $\partial r_i/\partial\rho$. Matrix-analytic methods can compute $P(N > K)$ for any fixed~$\rho$, but obtaining the derivative requires finite-difference approximations with their attendant error and no structural insight.

Figure~\ref{fig:sensitivity} illustrates both tail probabilities and their sensitivities across all four models.

\begin{figure}[!ht]
\centering
\includegraphics[width=0.85\linewidth]{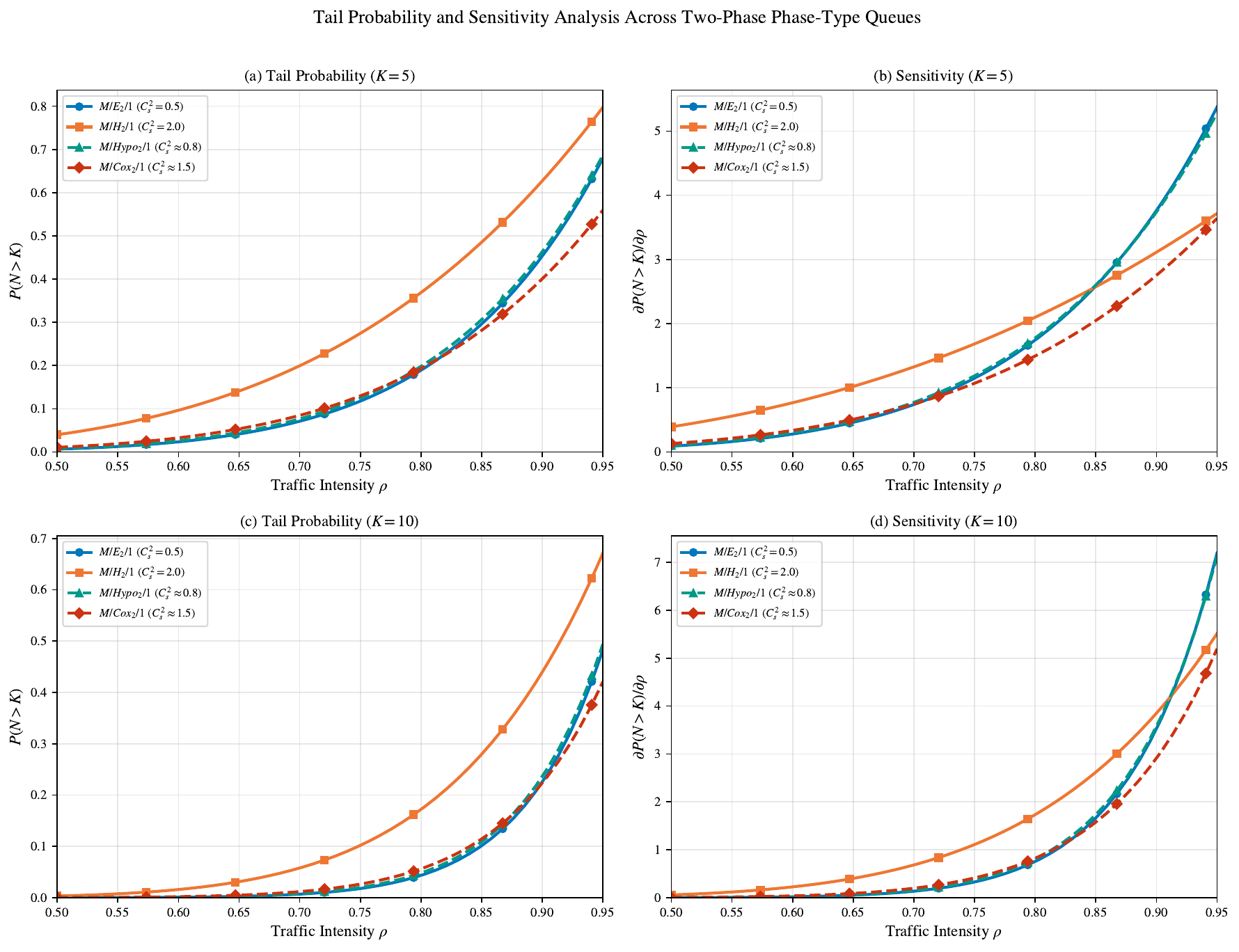}
\caption{Tail probability and sensitivity analysis across two-phase phase-type queues. Left column: $P(N > K)$ versus $\rho$. Right column: $\partial P(N > K)/\partial\rho$ versus $\rho$. Top row: $K = 5$; bottom row: $K = 10$. The sensitivities are computed exactly from the closed-form expressions.}
\label{fig:sensitivity}
\end{figure}

Several structural insights emerge. The ordering $M/E_2/1 < M/Hypo_2/1 < M/Coxian_2/1 < M/Hyper_2/1$ in tail probability magnitude corresponds to the ordering of $C_s^2$, confirming that variability drives congestion risk. A counterintuitive crossover appears at high utilization ($\rho > 0.9$): low-variability systems exhibit \emph{greater} sensitivity than $M/Hyper_2/1$, because $M/Hyper_2/1$ is already near saturation while low-variability systems are traversing a steeper transition region. Such insights emerge naturally from the symbolic expressions but would require extensive numerical experimentation to discover otherwise.

These exact sensitivities have immediate practical applications. For robust capacity planning under demand uncertainty with variance $\sigma_\rho^2$, the variance in overflow probability is approximately $\mathrm{Var}[P(N > K)] \approx (\partial P(N > K)/\partial\rho)^2 \sigma_\rho^2$, computable exactly. For gradient-based cost optimization involving holding costs, overflow penalties, and capacity costs, the closed-form gradient avoids repeated numerical resolves.

\subsection{Comparative Sojourn-Time Analysis}\label{sec:sojourn_comparison}

The closed-form sojourn-time distributions (Proposition~\ref{prop:sojourn_unified}) enable rigorous comparison across queue types. Figure~\ref{fig:sojourn_comparison} compares sojourn-time distributions for $M/M/1$, $M/E_2/1$, and $M/Hyper_2/1$ systems calibrated to identical mean sojourn times (Table~\ref{tab:sojourn_params}).

\begin{figure}[!ht]
\centering
\includegraphics[width=0.7\linewidth]{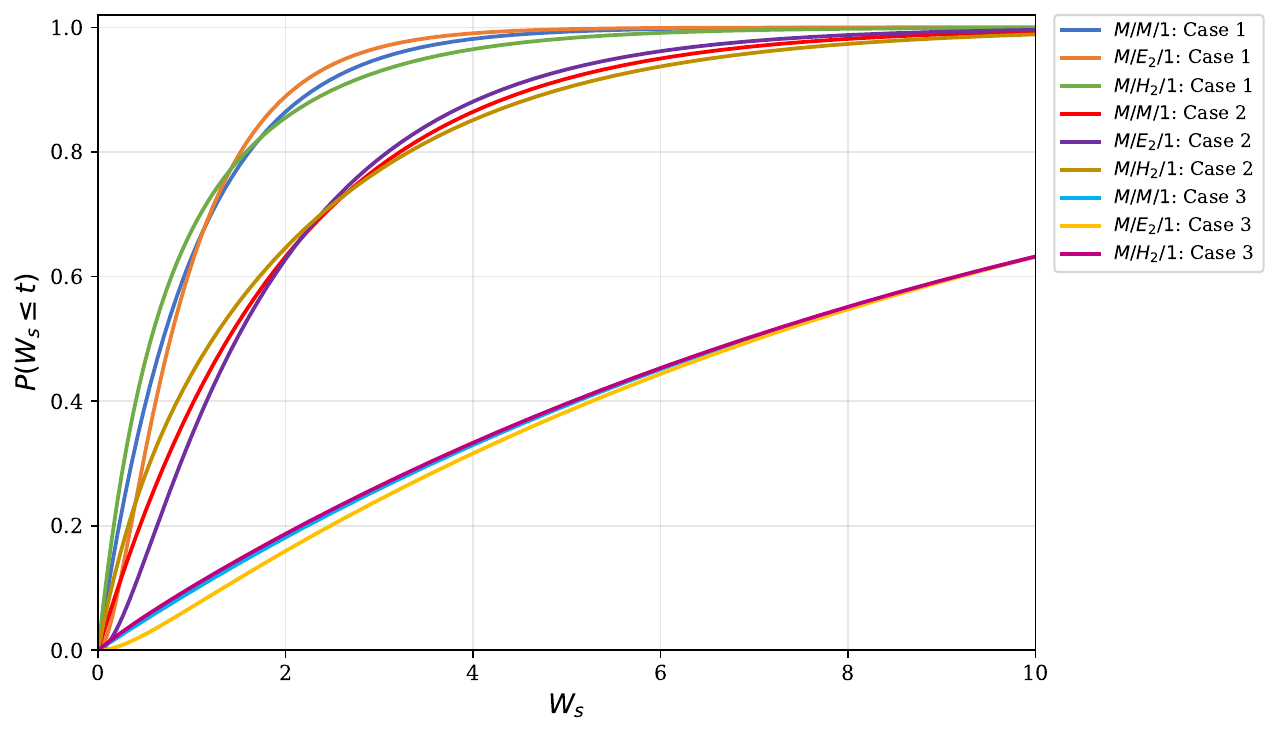}
\caption{Sojourn-time distributions calibrated to equal mean sojourn times ($\lambda = 1$ for all). Despite identical means, the distributions differ substantially: $M/Hyper_2/1$ exhibits heavier tails while $M/E_2/1$ shows reduced variability.}
\label{fig:sojourn_comparison}
\end{figure}

\begin{table}[!ht]
\centering
\caption{Parameter configurations yielding equal mean sojourn times ($\lambda = 1$).}
\label{tab:sojourn_params}
\small
\begin{tabular}{@{}ccccc@{}}
\toprule
& $M/M/1$ & $M/E_2/1$ & $M/Hyper_2/1$ & $E[W_s]$ \\
\midrule
Case 1 & $\mu = 2$ & $\mu = 2 + \sqrt{3}$ & $\mu_1 = 1$, $\mu_2 = 3$, $q = 0.5(4-\sqrt{13})$ & 1 \\
Case 2 & $\mu = 1.5$ & $\mu = 2.823$ & $\mu_1 = 1$, $\mu_2 = 3$, $q = 0.445$ & 2 \\
Case 3 & $\mu = 1.1$ & $\mu = 2.154$ & $\mu_1 = 1$, $\mu_2 = 3$, $q = 0.856$ & 10 \\
\bottomrule
\end{tabular}
\end{table}

Despite equal means, the distributions differ substantially. $M/Hyper_2/1$ exhibits higher probability mass at short sojourn times but heavier tails, while $M/E_2/1$ shows reduced variability. These differences have direct implications for service-level agreements: a constraint $P(W_s > 5) \leq 0.05$ may be satisfied by $M/E_2/1$ but violated by $M/Hyper_2/1$ even when both systems share the same mean.

\subsection{Risk Metrics}\label{sec:risk}

Financial and reliability applications often require tail-dependent metrics such as Value-at-Risk or conditional Value-at-Risk. With closed-form solutions:
\[
P(N > k \mid N > j) = \frac{A_1 r_1^{k+1}/(1-r_1) + A_2 r_2^{k+1}/(1-r_2)}{A_1 r_1^{j+1}/(1-r_1) + A_2 r_2^{j+1}/(1-r_2)},
\]
and conditional expectations such as $E[N \mid N > n_0]$ (equation~\eqref{eq:conditional_exact}) can be evaluated without truncation error.

\section{Case-Mix Sensitivity in a Procedure Suite: An Application Calibrated to Surgical Time Data}\label{sec:casestudy}

\subsection{Setting and data}\label{sec:cs_setting}

\citet{strum2000surgeon} report surgical times for 46{,}322 single-procedure cases performed at a large teaching hospital, disaggregated by Current Procedural Terminology (CPT) code. Two entries in that record describe a single-server system of practical interest. CPT~52000, cystoscopy without biopsy, accounts for $n = 1{,}521$ cases at $18.7 \pm 17.8$ minutes; CPT~52204, cystoscopy with biopsy, accounts for $n = 369$ cases at $37.2 \pm 22.9$ minutes.

A urological endoscopy room serving this caseload handles one patient at a time. Most patients undergo the examination alone; a minority additionally require a biopsy, and the room is occupied throughout. Service therefore consists of an examination phase followed, with probability $q$, by a biopsy phase, which is the $M/Coxian_2/1$ structure of Section~\ref{sec:MCox2} with the phase decomposition supplied by the clinical procedure rather than imposed for analytical convenience.

The biopsy share is a case-mix quantity that scheduling policy can influence, and the operational question is how sharply backlog risk deteriorates as that share rises. What follows answers this question from the closed-form results of Section~\ref{sec:closedform}, with every service parameter taken from the reported data.

\subsection{Calibration}\label{sec:cs_calibration}

Surgical procedure times are conventionally modeled by the lognormal distribution \citep{may2000fitting,spangler2004estimating}, which fits the data well but possesses no rational Laplace transform and therefore admits no closed-form queueing analysis. A two-phase phase-type distribution matching the first two moments provides a tractable surrogate, and the phase structure here has the additional merit of corresponding to the two clinical components.

The case counts reported above give
\begin{equation}\label{eq:q_calib}
q = \frac{369}{1{,}521 + 369} = 0.195 .
\end{equation}
This figure is corroborated independently within the same institution: \citet{strum2003estimating} report 10{,}740 surgeries carrying exactly two CPT codes against 46{,}322 carrying one, a second-component rate of $0.188$.

Taking the examination as phase~1 and the biopsy as the incremental phase~2 gives mean phase durations $1/\mu_1 = 18.70$ and $1/\mu_2 = 37.2 - 18.7 = 18.50$ minutes. The resulting $Coxian_2$ service distribution has
\begin{equation}\label{eq:cs_moments}
E[S] = \frac{1}{\mu_1} + \frac{q}{\mu_2} = 22.31 \text{ min}, \qquad C_s^2 = 0.945,
\end{equation}
against an empirical pooled mean of $22.31$ minutes and $C_s^2 = 0.826$ computed directly from the two reported class moments. Both values lie in $[0.5,1)$, within the range spanned by the two-phase family (Section~\ref{sec:ph2}).

\begin{remark}\label{rem:branching}
The $Coxian_2$ representation presumes that the biopsy decision is taken during the examination. Where the procedure type is instead fixed at booking, the appropriate model is a mixture, and the same two class means and the same $q$ yield an $M/Hyper_2/1$ system with $C_s^2 = 1.216$ (Section~\ref{sec:MH2}). The available data do not distinguish the two mechanisms. This is immaterial for what follows: both models are treated in closed form above, and the analysis proceeds identically under either reading.
\end{remark}

\citet{strum2000surgeon} report service times but not arrival rates. Rather than adopt a single caseload figure, we report results across $\rho \in [0.55, 0.97]$, corresponding to arrival rates of $1.5$ to $2.6$ cases per hour at the calibrated $E[S]$.

\subsection{Case-mix sensitivity of overflow risk}\label{sec:cs_sensitivity}

Let $K$ denote the backlog beyond which cases must be deferred to a subsequent session. The overflow probability $P(N > K)$ follows from equation~\eqref{eq:tail_exact} with the coefficients of Theorem~\ref{thm:coxian2}. Because those coefficients are explicit functions of $q$, the sensitivity
\begin{equation}\label{eq:dPdq}
\frac{\partial P(N>K)}{\partial q} = \frac{\partial}{\partial q}\left[\frac{A_1 r_1^{K+1}}{1-r_1} + \frac{A_2 r_2^{K+1}}{1-r_2}\right]
\end{equation}
is available exactly, the arrival rate being held fixed. Since $\rho = \lambda(1/\mu_1 + q/\mu_2)$, a shift in case mix alters the offered load as well as the shape of the service distribution: at $\lambda$ calibrated to $\rho = 0.850$, raising $q$ by ten percent carries the system to $\rho = 0.864$. The derivative~\eqref{eq:dPdq} accounts for both channels, which is the operationally relevant quantity, since scheduling a more demanding case mix does in fact load the room more heavily. Table~\ref{tab:cs_sensitivity} reports both the overflow probability and its elasticity with respect to case mix,
\begin{equation}\label{eq:elasticity}
\varepsilon(K) = \frac{\partial P(N>K)}{\partial q}\cdot\frac{q}{P(N>K)} ,
\end{equation}
which measures the proportional increase in overflow risk per proportional increase in the biopsy share.

\begin{table}[!ht]
\centering
\caption{Overflow probability and its case-mix elasticity for the calibrated cystoscopy suite ($1/\mu_1 = 18.70$ min, $1/\mu_2 = 18.50$ min, $q = 0.195$).}
\label{tab:cs_sensitivity}
\small
\begin{tabular}{@{}ccccc@{}}
\toprule
$\rho$ & $K$ & $P(N>K)$ & $\partial P(N>K)/\partial q$ & $\varepsilon(K)$ \\
\midrule
0.70 &  5 & $1.13\times10^{-1}$ & $5.29\times10^{-1}$ & 0.917 \\
0.70 & 10 & $1.79\times10^{-2}$ & $1.53\times10^{-1}$ & 1.666 \\
0.70 & 20 & $4.54\times10^{-4}$ & $7.35\times10^{-3}$ & 3.163 \\
\addlinespace
0.85 &  5 & $3.70\times10^{-1}$ & $1.81\times10^{0}$  & 0.958 \\
0.85 & 10 & $1.60\times10^{-1}$ & $1.44\times10^{0}$  & 1.752 \\
0.85 & 20 & $3.01\times10^{-2}$ & $5.15\times10^{-1}$ & 3.342 \\
\addlinespace
0.95 &  5 & $7.31\times10^{-1}$ & $3.67\times10^{0}$  & 0.981 \\
0.95 & 10 & $5.61\times10^{-1}$ & $5.18\times10^{0}$  & 1.801 \\
0.95 & 20 & $3.31\times10^{-1}$ & $5.84\times10^{0}$  & 3.443 \\
\bottomrule
\end{tabular}
\end{table}

Figure~\ref{fig:casestudy} contrasts the two quantities. The overflow probability itself varies over five orders of magnitude across the utilization range, as expected. The elasticity does not: at every threshold it moves by less than ten percent as $\rho$ rises from $0.70$ to $0.95$.

\begin{figure}[!ht]
\centering
\includegraphics[width=\linewidth]{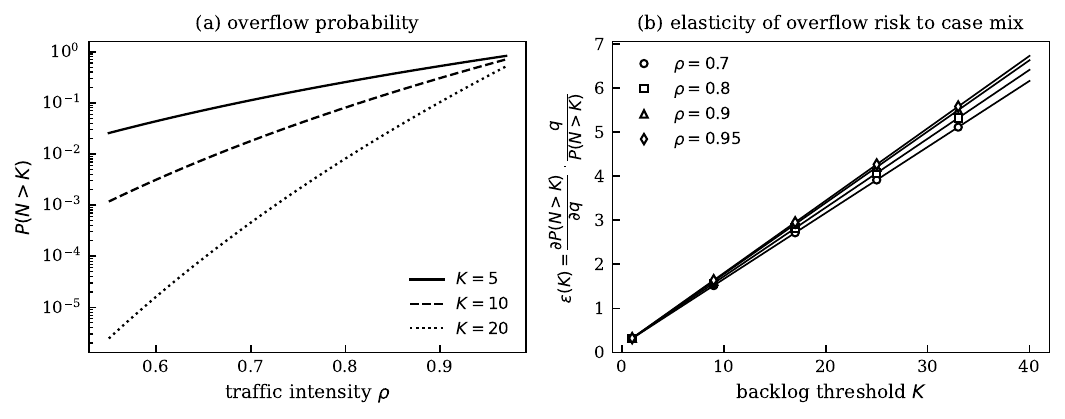}
\caption{Case-mix sensitivity in the calibrated cystoscopy suite. (a) Overflow probability $P(N>K)$ against traffic intensity, spanning five orders of magnitude. (b) Elasticity of overflow risk with respect to the biopsy share, which is affine in the backlog threshold $K$ and nearly invariant in $\rho$.}
\label{fig:casestudy}
\end{figure}

\subsection{An affine law for the elasticity}\label{sec:cs_affine}

The near-invariance visible in Figure~\ref{fig:casestudy}(b) is not a numerical coincidence, and the closed form identifies its source. Since $r_1 > r_2$, the first term of~\eqref{eq:tail_exact} dominates for moderate and large $K$, so that
\begin{equation}\label{eq:logtail}
\ln P(N>K) \;\simeq\; \ln\frac{A_1}{1-r_1} \;+\; (K+1)\ln r_1 .
\end{equation}
Differentiating with respect to $\ln q$ gives the elasticity directly:
\begin{equation}\label{eq:affine}
\varepsilon(K) \;=\; q\,\frac{\partial}{\partial q}\ln\!\left(\frac{A_1}{1-r_1}\right) \;+\; (K+1)\,q\,\frac{\partial \ln r_1}{\partial q} .
\end{equation}
The elasticity is therefore \emph{affine in the backlog threshold}, with slope equal to the elasticity of the dominant geometric ratio $r_1$ with respect to the case mix. Table~\ref{tab:cs_affine} confirms the identity: the slope predicted by $q\,\partial \ln r_1/\partial q$ agrees with the slope measured from~\eqref{eq:elasticity} to six decimal places at every traffic intensity tested.

\begin{table}[!ht]
\centering
\caption{Slope of the elasticity in $K$: closed-form prediction from~\eqref{eq:affine} against the value measured from the exact sensitivity~\eqref{eq:dPdq}.}
\label{tab:cs_affine}
\small
\begin{tabular}{@{}ccc@{}}
\toprule
$\rho$ & predicted $q\,\partial \ln r_1/\partial q$ & measured slope \\
\midrule
0.70 & 0.149727 & 0.149727 \\
0.80 & 0.156110 & 0.156110 \\
0.90 & 0.161641 & 0.161641 \\
0.95 & 0.164141 & 0.164141 \\
\bottomrule
\end{tabular}
\end{table}

Two consequences follow, neither of them evident from tabulated output alone. First, the slope depends on $\rho$ only through $\partial \ln r_1/\partial q$, which varies by under ten percent across the stable range; the proportional damage inflicted by a case-mix shift is therefore governed by the backlog threshold under consideration rather than by how heavily loaded the room is. This runs against the intuition that a congested system should be proportionally more exposed to case-mix change: it is the absolute risk that escalates with utilization, not the relative sensitivity. Second, the decomposition separates the two channels through which case mix acts, a level effect through $A_1/(1-r_1)$ and a decay-rate effect through $r_1$, and shows that only the second is amplified by the threshold.

For a manager, the practical reading of Table~\ref{tab:cs_sensitivity} is direct. Consider a suite operating at $\rho = 0.85$ against a service target expressed as $P(N > 20) \leq 0.05$. A ten percent relative increase in the biopsy share, from $q = 0.195$ to $q = 0.215$, raises the overflow probability from $0.0301$ to $0.0419$, breaching the target. The same shift evaluated at $K = 5$ raises $P(N>5)$ from $0.370$ to $0.407$, a proportional increase of ten percent rather than thirty-nine: a suite managed against a shallow backlog threshold is materially less exposed to the same case-mix drift.

Note that the elasticity is a local measure, and the displacements above are large enough for curvature to matter. The first-order prediction $\varepsilon(K)\times 10\%$ understates the exact recomputation by four percent at $K=5$ and by fifteen percent at $K=20$, the discrepancy growing with the threshold because that is where the elasticity itself is largest. The elasticity is best read as identifying which thresholds are exposed and by what mechanism; where a specific displacement matters, the closed form permits exact re-evaluation at negligible cost.

\subsection{Remark on numerical differentiation}\label{sec:cs_numerical}

The sensitivities above could in principle be approximated by finite differences on any method that returns $P(N>K)$ numerically. Doing so requires a step size, and the choice is not innocuous. Using the closed form itself as a noise-free oracle in double precision, central differences at $\rho = 0.85$, $K = 10$ attain a relative error of $6\times10^{-11}$ at $h = 10^{-6}$, degrading to $1\times10^{-3}$ at $h = 10^{-2}$ through truncation bias and to $2\times10^{-1}$ at $h = 10^{-15}$ through cancellation. The accuracy attainable at the best step size is ample for practical purposes; the difficulty is that its location is not known in advance, and a choice two decades away from it costs four orders of magnitude.

The more consequential advantage of the closed form is not accuracy but structure. A finite-difference study would reproduce the numbers in Table~\ref{tab:cs_sensitivity} and would reveal, on inspection, that the elasticity is linear in $K$. It would not identify the slope as $q\,\partial \ln r_1/\partial q$, nor explain why that slope is nearly invariant in $\rho$. Relations of the form~\eqref{eq:affine} are visible only when the coefficients are available as functions.

\section{The Boundary of Analytical Tractability}\label{sec:boundary}

The analytical results naturally raise the question of extension to more than two phases. We have derived the complete closed-form solution for $M/E_3/1$ using Cardano's formula \citep{stewart2015galois}. The distribution has the form $p_n = A_1 r_1^n + A_2 r_2^n + A_3 r_3^n$, but the algebraic expressions are symbolically prohibitive: the coefficient $A_1$ alone requires over 15{,}000 characters of nested radicals. At this scale, the formulas lose the very advantage of closed-form solutions: their analytical tractability.

This explosion in complexity is not accidental. We now show that it reflects a structural boundary: $PH_2$ is the maximal phase-type family admitting steady-state distributions expressible as sums of real geometric terms with coefficients involving only rational operations and square roots.

\begin{proposition}[Tractability Boundary]\label{prop:boundary}
Consider $M/PH_k/1$ queues with arrival rate $\lambda$ and $k$-phase phase-type service.
\begin{enumerate}
\item[(i)] For all four $M/PH_2/1$ models (Erlang-2, hypoexponential-2, hyperexponential-2, Coxian-2), the PK denominator is quadratic with strictly positive discriminant for all stable parameter configurations ($\rho < 1$). Consequently, the distribution $p_n = A_1 r_1^n + A_2 r_2^n$ has real geometric ratios $r_1, r_2$ and coefficients expressible using only rational operations and square roots.

\item[(ii)] For $M/E_3/1$, the PK denominator (after extracting the stability root $z=1$) is a cubic with discriminant
\begin{equation}\label{eq:E3_discriminant}
\Delta_{E_3} = -\frac{\rho^8(\rho^2 + 14\rho + 81)}{19683}.
\end{equation}
Since $\rho^8 > 0$ and $\rho^2 + 14\rho + 81 > 0$ for all $\rho > 0$ (its own discriminant is $196 - 324 = -128 < 0$), we have $\Delta_{E_3} < 0$ for every $\rho \in (0,1)$. The cubic therefore has one real root and two complex conjugate roots at every traffic intensity, and the distribution cannot be written in the form $p_n = \sum_{i=1}^{3} A_i r_i^n$ with real coefficients and real geometric ratios.

\item[(iii)] For $k \geq 5$, the PK denominator has degree $k$, and the Abel--Ruffini theorem precludes expressing its roots in radicals in general.
\end{enumerate}
\end{proposition}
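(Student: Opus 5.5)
The plan is to handle the three parts separately. Parts (i) and (ii) are explicit discriminant computations on the PK denominator of \eqref{PK}. Part (iii) reduces to a citation plus a remark on what ``in general'' means.

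For (i), I will write $u = \lambda(1-z)$ and clear denominators in $\tilde{B}(u) - z$ for each of the four LSTs of Section~\ref{sec:closedform}. This gives a cubic in $z$ that vanishes at $z=1$. Dividing out $(1-z)$ leaves a quadratic whose discriminant, up to a positive factor, is the radicand already appearing in Theorems~\ref{Theorem_1}--\ref{thm:coxian2}. I then show each radicand is strictly positive on the stable region.
\begin{itemize}
\item $E_2$: the radicand is $\rho(\rho+8)>0$.
\item $Hypo_2$: it is $(\mu_1-\mu_2)^2 + \lambda^2 + 2\lambda(\mu_1+\mu_2) > 0$.
\item $Hyper_2$: I complete the square in $\lambda$, giving $\bigl(\lambda-(2q-1)(\mu_1-\mu_2)\bigr)^2 + 4q(1-q)(\mu_1-\mu_2)^2$. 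This is positive unless $\mu_1=\mu_2$. In that case the expression reduces to $\lambda^4>0$.
\item $Coxian_2$: I rewrite the radicand as $(\lambda+\mu_2-\mu_1)^2 + 4q\lambda\mu_1$. This is positive for $q>0$; the case $q=0$ is the degenerate $M/M/1$ reduction noted after Theorem~\ref{thm:coxian2}.
\end{itemize}
With a positive discriminant, the two roots $z_{1,2}$ are real and distinct. The ratios $r_i = 1/z_i$ and the partial-fraction residues $A_i$ are then rational functions of the parameters and a single square root, as displayed in the theorems.

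For (ii), I take $\tilde{B}(s) = (\mu/(\mu+s))^3$ with $\lambda/\mu = \rho/3$. The denominator becomes $1 - z\bigl(1+\rho(1-z)/3\bigr)^3$, up to a positive factor. I factor out $(1-z)$ to get an explicit cubic $az^3+bz^2+cz+d$ with coefficients polynomial in $\rho$. I then evaluate $b^2c^2 - 4ac^3 - 4b^3d - 27a^2d^2 + 18abcd$ symbolically and factor it, aiming for \eqref{eq:E3_discriminant}. The stated sign argument then gives one real and two complex-conjugate roots for all $\rho>0$.

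To convert this into the non-representability claim, I will use uniqueness of the representation. Sequences $(r^n)_{n\ge0}$ for distinct (possibly complex) $r$ are linearly independent, by a Vandermonde argument on any finitely many of them. Hence if $p_n=\sum_i A_i r_i^n$ also held with real $A_i, r_i$, then by uniqueness the real ratios would have to coincide with the true ratios $1/z_i$, two of which are non-real. That is a contradiction, provided the complex residues are nonzero. I will check that last point directly: a vanishing residue would make $P(z)$ have lower degree, which contradicts irreducibility of the PK numerator and denominator pair.

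For (iii), the PK denominator for $E_k$ (or a general $PH_k$) is, after removing $z=1$, a polynomial of degree $k$ in $z$, and I invoke Abel--Ruffini \citep{stewart2015galois}. Since ``in general'' is a statement about the parametric family, I would make it concrete. I would exhibit one rational $\rho$ (for $M/E_5/1$) at which the quintic is irreducible over $\mathbb{Q}$ with Galois group $S_5$. A workable criterion is a prime $p$ modulo which the quintic is irreducible, another prime modulo which it splits as a quadratic times irreducible linear factors, and an irreducible quadratic factor times a cubic elsewhere. For $k>5$ I would argue similarly, or state the claim as the generic obstruction.

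The main obstacle is the symbolic factorization of the cubic discriminant in (ii). This is tedious by hand; I would carry it out with computer algebra and verify it at several numerical values of $\rho$. The second point of rigor is the explicit Galois group certificate in (iii).
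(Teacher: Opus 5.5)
Your proposal is correct. It shares the paper's skeleton: identify each $PH_2$ discriminant with the radicand $\beta^2$ of the theorems, expand the $E_3$ cubic discriminant symbolically, and invoke Abel--Ruffini. Three of your sub-arguments, however, turn assertions in the paper's proof into actual proofs.

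\emph{Coxian-2 positivity.} The paper argues by continuity from the boundary cases $q=0$ and $q=1$ plus a computational check, which by itself does not rule out an interior zero. Your identity $\lambda^2+(\mu_1-\mu_2)^2+2\lambda(\mu_2+\mu_1(2q-1))=(\lambda+\mu_2-\mu_1)^2+4q\lambda\mu_1$ settles all $q>0$ at once. It also shows that setting $q=0$ aside is necessary, not cosmetic: the discriminant vanishes there when $\mu_1=\lambda+\mu_2$, a stable point. So the paper's claim $(\lambda-\mu_1+\mu_2)^2>0$ at $q=0$ fails. The same caveat applies to your $Hyper_2$ bullet, where you should state $0<q<1$ explicitly, since at $q=1$, $\mu_1=\lambda+\mu_2$ the discriminant is again zero.

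\emph{Non-representability in (ii).} The paper only remarks that an oscillating term $|C|s^n\cos(n\theta+\varphi)$ ``precludes'' a real geometric sum. Your linear-independence (Vandermonde) uniqueness argument makes this rigorous. The nonvanishing of the complex residues is immediate, because $P(z)=-(1-\rho)/Q(z)$ has constant numerator.

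\emph{Part (iii).} The paper gives a bare citation, which does not by itself show that this particular one-parameter family is non-solvable. Your specialization certificate closes that gap, and it is easy to realize. With $y=1+\rho(1-z)/5$ one has $1-zy^5=(1-z)\bigl[y^5-\tfrac{\rho}{5}(y^4+y^3+y^2+y+1)\bigr]$. At $\rho=1/2$, the substitution $t=1/y$ gives $t^5+t^4+t^3+t^2+t-10$. This factors as $t\,(t^4+t^3+t^2+t+1)$ mod $2$ and as $(t^2+2t+2)(t^3+2t^2+t+1)$ mod $3$. The patterns $1+4$ and $2+3$ have no common coarsening, which forces irreducibility over $\mathbb{Q}$. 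The cube of the $(2,3)$ Frobenius is a transposition, so the Galois group is $S_5$.

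The paper's route buys brevity; yours supplies proofs of the steps the paper only asserts.
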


\begin{pf}
\emph{Part~(i).} The discriminant of each model's PK denominator is the $\beta^2$ term appearing in Theorems~\ref{Theorem_1}--\ref{thm:coxian2}. For $M/E_2/1$: $\beta^2 = \rho(\rho + 8) > 0$. For $M/Hypo_2/1$: $\beta^2 = (\lambda + \mu_1 - \mu_2)^2 + 4\lambda\mu_2 > 0$ as a sum of a square and a positive term. For $M/Hyper_2/1$: $\beta^2/\lambda^2 = (\lambda - (2q-1)(\mu_1-\mu_2))^2 + 4q(1-q)(\mu_1-\mu_2)^2 > 0$ as a sum of two squares with $0 < q < 1$. For $M/Coxian_2/1$: positivity is established by continuity from the boundary cases $q = 0$ (where $\beta^2/\mu_2^2 = (\lambda-\mu_1+\mu_2)^2 > 0$) and $q = 1$ (where $\beta^2/\mu_2^2$ reduces to the $M/Hypo_2/1$ discriminant), verified computationally across the stable parameter space $\rho < 1$. In all cases, positive discriminant guarantees two distinct real roots, and the quadratic formula yields all coefficients in terms of rational operations and square roots.

\emph{Part~(ii).} Substituting $\tilde{B}(s)=(\mu/(\mu+s))^3$ and $s = \lambda - \lambda z$ into the PK formula with $\rho = 3\lambda/\mu$ yields a degree-4 polynomial in $z$. Factoring out the stability root $(z-1)$ leaves the cubic
\begin{equation}\label{eq:E3_cubic}
Q(z) = \frac{\rho^3}{27}z^3 - \left(\frac{2\rho^3}{27} + \frac{\rho^2}{3}\right)z^2 + \left(\frac{\rho^3}{27} + \frac{\rho^2}{3} + \rho\right)z - 1.
\end{equation}
The discriminant of a cubic $az^3 + bz^2 + cz + d$ is $\Delta = 18abcd - 4b^3d + b^2c^2 - 4ac^3 - 27a^2d^2$. Direct computation yields \eqref{eq:E3_discriminant}. Since $19683 = 3^6$, $\rho^8 > 0$ for $\rho > 0$, and $\rho^2 + 14\rho + 81 > 0$ has no real roots, $\Delta_{E_3} < 0$ for all $\rho \in (0,1)$. A negative discriminant for a real cubic implies exactly one real root and two complex conjugate roots $r = se^{\pm i\theta}$, whose contribution $|C|\,s^n\cos(n\theta + \varphi)$ to the distribution precludes the purely real geometric sum $\sum A_i r_i^n$.

\emph{Part~(iii).} The Abel--Ruffini theorem \citep{stewart2015galois} states that there exists no general algebraic formula in radicals for the roots of polynomials of degree five or higher. \qed
\end{pf}

\begin{remark}
The boundary depends on both phase count and phase-type topology. For $M/Hyper_3/1$ (a mixture of three exponentials), numerical computation across 429 stable parameter configurations shows the cubic discriminant is positive in every case, yielding three real roots. However, a cubic with three real roots falls under the classical \emph{casus irreducibilis}: Cardano's formula requires cube roots of complex numbers to express roots that are themselves real. The alternative trigonometric solution involves transcendental functions. In either case, the resulting expressions are symbolically intractable. For $M/E_4/1$, the quartic consistently produces two real and two complex roots (verified at 200 values of $\rho$ across $(0,1)$), confirming that complex roots in sequential phase-type models persist beyond $k=3$.
\end{remark}

The connection between the PK denominator and the matrix-analytic framework is direct: the eigenvalues of the rate matrix $\mathbf{R}$ are the reciprocals of the roots of the PK denominator polynomial. \citet{marin2014explicit} showed that symbolic expressions for $\mathbf{R}$ exist at arbitrary $k$, but eigendecomposing $\mathbf{R}$ into scalar geometric components requires solving its characteristic polynomial, which is precisely the PK denominator. Numerical tools bypass eigendecomposition entirely by computing $\boldsymbol{\pi}_n = \boldsymbol{\pi}_1 \mathbf{R}^{n-1}$ through matrix arithmetic, implicitly handling complex eigenvalues without ever needing to express them symbolically. The barrier is therefore not computational but analytical: the boundary separates what can be \emph{computed} (any $k$, numerically) from what can be \emph{written as a compact symbolic formula} ($k=2$ only).

\section{Conclusions}\label{sec:conclusions}

This paper provides a unified collection of explicit closed-form results for $M/PH_2/1$ queues: steady-state distributions $p_n = A_1 r_1^n + A_2 r_2^n$ for all four two-phase families, and sojourn-time densities $f_{W_s}(t) = c_1 e^{\alpha_1 t} + c_2 e^{\alpha_2 t}$ derived directly from the PK transform. All coefficients are explicit functions of the queue parameters, consolidated in ready-to-use reference tables (Tables~\ref{tab:sojourn_reference},~\ref{tab:mg_matrices}, and~\ref{tab:ph_matrices}).

The numerical validation is itself a contribution. By benchmarking both queue-length and sojourn-time computation against exact closed-form solutions, we expose distinct failure modes within the same software (Table~\ref{tab:complementary}): tail underflow in iterative matrix-power computation versus conditioning loss in the matrix exponential at extreme parameters. The finding that sojourn-time accuracy degrades by up to ten significant digits in regimes where queue-length accuracy remains at machine precision was invisible to queue-length validation alone and, to our knowledge, has not been documented previously.

Section~\ref{sec:casestudy} shows what the functional form buys in an applied setting. Calibrated to published surgical time data for a urological procedure suite, the closed-form coefficients yield an exact affine law for the elasticity of overflow risk with respect to case mix, with slope equal to the elasticity of the dominant geometric ratio. The elasticity proves nearly invariant in traffic intensity even as the overflow probability itself moves over five orders of magnitude, so that the proportional exposure to case-mix drift is governed by the backlog threshold rather than by utilization. A numerical study would reproduce the values and reveal the linearity; it would not identify the slope or explain the invariance.

For practitioners, the guidance is straightforward: use these closed-form solutions for two-phase systems when analytical operations are needed, when tail probabilities at large queue lengths are required, or when sojourn-time accuracy at extreme parameters is critical; rely on matrix-analytic libraries for higher-order phase-type distributions with confidence in their queue-length accuracy and good sojourn-time accuracy at moderate parameters. Together, these approaches span phase-type queueing analysis from the analytically tractable to the computationally essential.

Several directions for future work emerge. The $PH_2$ benchmarks established here can serve as ground truth for validating improved numerical algorithms for higher-order phase-type systems, such as the highly accurate variants of logarithmic reduction proposed by \citet{gu2022highly}. The sojourn-time accuracy landscape suggests that developing numerically stable alternatives to the standard matrix-exponential computation at extreme parameters remains an open computational challenge. The three-phase case ($M/E_3/1$), while symbolically prohibitive for general use, may yield tractable special-case benchmarks for specific parameter regimes. Finally, the analytical operations demonstrated here, particularly sensitivity analysis and gradient-based optimization, could be extended to multi-class and priority queueing systems where two-phase service distributions arise in individual customer classes.

\section*{Declaration of competing interest}
The author declares no competing financial interests or personal relationships
that could have appeared to influence the work reported in this paper.

\section*{Funding}
This research did not receive any specific grant from funding agencies in the
public, commercial, or not-for-profit sectors.

\section*{Code availability}
Code reproducing the numerical results is available from the author on request.
The BuTools library used for validation is available at
\url{https://webspn.hit.bme.hu/~telek/tools/butools}.

\appendix
\numberwithin{equation}{section}
\numberwithin{table}{section}
\numberwithin{figure}{section}

\section{Proof of Theorem~\ref{Theorem_1}: $M/E_2/1$ Steady-State Distribution}\label{app:proof1}
\begin{pf}
The denominator of \eqref{PH-PL} is a quadratic in $z$. Factoring and applying partial-fraction decomposition yields:
\begin{align}
P_{L^d}(z) = \frac{A_1}{1-r_1 z} + \frac{A_2}{1-r_2 z},\label{eq.5}
\end{align}
with $r_{1,2}$ and $A_{1,2}$ as stated in Theorem~\ref{Theorem_1}, after using $\rho^{3/2} - \sqrt{\rho} = \sqrt{\rho}(\rho - 1)$ to simplify. Since $|r_i z| < 1$ for $|z| < 1$ and $\rho < 1$, geometric series expansion gives $p_n = A_1 r_1^n + A_2 r_2^n$, and normalization follows from $P_{L^d}(1) = 1$. \qed
\end{pf}

\section{Proof of Theorem~\ref{thm:hypo2}: $M/Hypo_2/1$ Steady-State Distribution}\label{app:proof2}
\begin{pf}
Substituting $\tilde{B}(s)=(\mu_1/(\mu_1+s))(\mu_2/(\mu_2+s))$ into \eqref{PK} with $s = \lambda - \lambda z$ yields:
$$P_{L^d}(z) = \frac{\mu_1\mu_2(1-\rho)}{z^2\lambda^2 + \mu_1\mu_2 - z\lambda(\lambda + \mu_1 + \mu_2)}.$$
The quadratic denominator factors with $\beta = \sqrt{\lambda^2 + 2\lambda\mu_1 + \mu_1^2 + 2\lambda\mu_2 - 2\mu_1\mu_2 + \mu_2^2}$, and partial-fraction decomposition in the form $\frac{A_1}{1-r_1z} + \frac{A_2}{1-r_2z}$ yields the coefficients and ratios in Theorem~\ref{thm:hypo2}. Geometric series expansion gives $p_n = A_1 r_1^n + A_2 r_2^n$. \qed
\end{pf}

\section{Proof of Theorem~\ref{thm:hyper2}: $M/Hyper_2/1$ Steady-State Distribution}\label{app:proof3}
\begin{pf}
Substituting $\tilde{B}(s) = q\mu_1/(\mu_1+s)+(1-q)\mu_2/(\mu_2+s)$ into \eqref{PK} with $s = \lambda - \lambda z$ and simplifying yields a rational function with quadratic denominator in $z$. Factoring the denominator with $\beta = \sqrt{\lambda^2(\lambda^2+(\mu_1-\mu_2)^2-2\lambda(2q-1)(\mu_1-\mu_2))}$ and applying partial-fraction decomposition:
$$P_{L^d}(z) = \frac{A_1}{1 - r_1 z} + \frac{A_2}{1 - r_2 z},$$
where $r_{1,2}$ and $A_{1,2}$ are as stated in Theorem~\ref{thm:hyper2}. Geometric series expansion yields \eqref{Hyper-2-Dist}. \qed
\end{pf}

\section{Proof of Theorem~\ref{thm:coxian2}: $M/Coxian_2/1$ Steady-State Distribution}\label{app:proof4}
\begin{pf}
Substituting the Coxian-2 LST $\tilde{B}(s)=\mu_1((1-q)s + \mu_2)/((s+\mu_1)(s+\mu_2))$ into \eqref{PK} with $s = \lambda - \lambda z$ yields a rational generating function with quadratic denominator. With $\beta = \mu_2\sqrt{\lambda^2 + (\mu_1 - \mu_2)^2 + 2\lambda(\mu_2 + \mu_1(2q-1))}$, partial-fraction decomposition gives $P_{L^d}(z) = A_1/(1-r_1z) + A_2/(1-r_2z)$ with coefficients as stated. Normalization follows from $\sum_{n=0}^{\infty} p_n = A_1/(1-r_1) + A_2/(1-r_2) = 1$. \qed
\end{pf}

\section{Matrix-Analytic Framework for Two-Phase Queues}\label{app:matrix_geometric}

Given the converged rate matrix $R$ from the QBD structure in Table~\ref{tab:mg_matrices}, steady-state probabilities are computed as: (1)~calculate $S = \alpha R(I-R)^{-1}\mathbf{1}$; (2)~scale: $\alpha_{\text{scaled}} = (\rho/S) \cdot \alpha$; (3)~compute $\pi_n = \alpha_{\text{scaled}} R^n \mathbf{1}$ for $n \geq 1$. Each $\pi_n$ requires $n$ matrix-vector multiplications, yielding $O(n)$ per-query cost.

\bibliographystyle{plainnat}
\bibliography{cas-refs}

\end{document}